\newtheorem{thm}{Theorem}
\newtheorem{lemma}{Lemma}
\newtheorem{prop}{Proposition}
\newtheorem{cor}{Corollary}
\newcommand{\bx} {\boldsymbol{x}}
\newcommand{\by} {\boldsymbol{y}}
\newcommand{\bH} {\boldsymbol{H}}
\newcommand{\bB} {\boldsymbol{B}}
\newcommand{\bA} {\boldsymbol{A}}
\newcommand{\bI} {\boldsymbol{I}}
\newcommand{\bR} {\boldsymbol{R}}
\newcommand{\bU} {\boldsymbol{U}}
\newcommand{\bu} {\boldsymbol{u}}
\newcommand{\bW} {\boldsymbol{W}}
\newcommand{\bQ} {\boldsymbol{Q}}
\newcommand{\bM} {\boldsymbol{M}}
\newcommand{\bP} {\boldsymbol{P}}
\newcommand{\bLam} {\boldsymbol{\Lambda}}
\newcommand{\gl}{\lambda}
\newcommand{\bxi} {\boldsymbol{\xi}}
\newcommand{\sN} {\mathcal{N}}
\newcommand{\sR} {\mathcal{R}}
\newcommand{{\diag}} {\mathrm{diag}}
\def\bal#1\eal{\begin{align}#1\end{align}}
\newcommand{\bp} {\begin{proof}}
\newcommand{\ep} {\end{proof}}
\newcommand{{\bRF}} {\right\}}
\newcommand{\tr}{\operatorname{tr}}
\begin{document}

\title{On The Capacity of Gaussian MIMO Channels Under Interference Constraints (full version)}

\author{Sergey Loyka

\vspace*{-1\baselineskip}


\thanks{S. Loyka is with the School of Electrical Engineering and Computer Science, University of Ottawa, Ontario, Canada, e-mail: sergey.loyka@uottawa.ca.}

}

\maketitle

\vspace*{-1\baselineskip}
\begin{abstract}
Gaussian MIMO channel under total transmit and multiple interference power constraints (TPC and IPCs) is considered. A closed-form solution for its optimal transmit covariance matrix is obtained in the general case. A number of more explicit closed-form solutions are obtained in some special cases, including full-rank and rank-1 (beamforming) solutions, which differ significantly from the well-known water-filling based solutions (e.g. signaling on the channel eigenmodes is not optimal anymore and the capacity can be zero for non-zero transmit power). Capacity scaling with transmit power is studied: its  qualitative behaviour is determined by a natural linear-algebraic structure of MIMO channels of multiple users. A simple rank condition is given to characterize the cases where spectrum sharing is possible. An interplay between the TPC and IPCs is investigated, including the transition from power-limited to interference-limited regimes. A bound on the rank of optimal covariance is established. A number of unusual properties of optimal covariance matrix are pointed out.
\end{abstract}

\vspace*{-.8\baselineskip}

\section{Introduction}

Aggressive frequency re-use and hybrid (non-orthogonal) access schemes envisioned as key technologies in 5G systems \cite{Shafi-17} can potentially generate significant amount of inter-user interference and hence should be designed and managed carefully. In this respect, multi-antenna (MIMO) systems have significant potential due to their significant signal processing capabilities, including interference cancellation and precoding, which can also be done in an adaptive and distributed manner \cite{Biglieri}\cite{Heath-19}.
The capacity and optimal signalling for the Gaussian MIMO channel under the total power constraints (TPC) is well-known: the optimal (capacity-achieving) signaling is Gaussian and, under the TPC, is on the eigenvectors of the channel with power allocation to the eigenmodes given by the water-filling (WF) \cite{Biglieri}-\cite{Telatar-95}. Under per-antenna power constraints (PAC), in addition or instead of the TPC, Gaussian signalling is still optimal but not on the channel eigenvectors anymore so that the standard water-filling solution over the channel eigenmodes does not apply \cite{Vu-11}\cite{Loyka-17}.

Much less is known under the added interference power constraint (IPC), which limits the power of interference induced by a secondary transmitter to a primary receiver in a spectrum-sharing system. A game-theoretic approach to this problem was proposed in \cite{Scurati-10}, where a fixed-point equation was formulated from which the optimal transmitt covariance matrix can in principle be determined. Unfortunately, no closed-form solution is known for this equation and the considered settings require the channel to the primary receiver to be full-rank hence excluding the important cases where the number of Rx antennas is less than the number of Tx antennas (typical for massive MIMO downlink); the TPC is not included explicitly (rather, being "absorbed" into the IPC), hence eliminating the important case of inactive IPC and, consequently, no interplay between the TPC and the IPC can be studied.

Cognitive radio MIMO systems under interference constraints have been also studied in \cite{Yang-13}\cite{Huh-10}\cite{Zhang-12}, where a number of numerical optimization algorithms were developed but no closed-form solutions are known to the underlying optimization problems. Optimal signaling for the Gaussian MIMO channel under the TPC and the IPC has been also studied in \cite{Zhang-10}-\cite{Loyka-17-2} using the dual problem approach, and was later extended to multi-user settings in \cite{Liu-12}. However, constraint matrices are required to be full-rank and no closed-form solution was obtained for optimal dual variables. Hence, various numerical algorithms or sub-optimal solutions were proposed. This limits insights significantly.

In this paper, we study the spectrum-sharing potential of Gaussian MIMO channels and concentrate on analysis rather than numerical algorithms. This provides deeper understanding of the problem and a number of insights unavailable from numerical algorithms alone. Specifically, we obtain novel closed-form solutions for an optimal transmit covariance matrix for the Gaussian MIMO channel under the TPC and multiple IPCs. All constraints are included explicitly and hence anyone is allowed to be inactive. This allows one to study the interplay between the power and interference constraints and, in particular, the transition from power-limited to interference limited regimes as the Tx power increases. As an added benefit, no limitations is placed on the rank of the channel to the PR, so that the number of antennas of the PR can be any (including massive MIMO settings). In some cases, our KKT-based approach leads to closed-form solutions for the optimal dual variables as well, including full-rank and rank-1 (beamforming) solutions and the conditions for their optimality. A simple rank condition is given to characterize the cases where spectrum sharing is possible for any interference power constraint. In general, the primary user has a major impact on the capacity at high SNR while being negligible at low SNR. The high-SNR behaviour of the capacity is qualitatively determined by the null space of the PR's channel matrix. The presented closed-form solutions of optimal signaling can be used directly in massive MIMO settings. Since numerical complexity of generic convex solvers can be prohibitively large for massive MIMO (in general, it scales as $m^6$ with the number $m$ of antennas), the above analytical solutions are a valuable low-complexity alternative.

Under the added IPC(s), the unitary-invariance of the feasible set is lost and hence many known solutions and standard "tricks" (e.g. Hadamard inequality) of the analysis under the TPC alone cannot be used. This has profound impact on optimal signaling strategies as well as on analytical techniques to solve the underlying optimization problem. In particular, unlike the standard water-filling solution, (i) signaling on the channel eigenmodes is not optimal anymore (unless all IPCs are inactive or if their channel eigenmodes are the same as those of the main MIMO channel); (ii) the rank of an optimal covariance matrix can exceed that of the channel; (iii) an optimal covariance matrix is not necessarily unique; (iv) the channel capacity can be zero for a non-zero Tx power and channel; (v) the channel capacity may stay bounded under unbounded growth of the Tx power. All these phenomena have major impact on the spectrum-sharing capabilities of MIMO channels. We demonstrate that capacity scaling with Tx power under multiple IPCs can be understood in terms of a natural linear-algebraic structure of MIMO channels of different users.

\textit{Notations}: bold capitals ($\bR$) denote matrices while bold lower-case letters ($\bx$) denote column vectors; $\bR^+$ is the Hermitian conjugation of $\bR$; $\bR \ge 0$ means that $\bR$ is positive semi-definite; $|\bR|,\ tr(\bR),\ r(\bR)$ denote determinant, trace and rank of $\bR$, respectively; $\gl_i(\bR)$ is $i$-th eigenvalue of $\bR$; unless indicated otherwise, eigenvalues are in decreasing order, $\gl_1\ge \gl_2\ge ..$; $\lceil\cdot\rceil$ denotes ceiling, while $(x)_+=\max[0,x]$ is the positive part of $x$; $\mathcal{R}(\bR)$ and $\mathcal{N}(\bR)$ denote the range and null space of $\bR$ while $\bR^{\dag}$ is its Moore-Penrose pseudo-inverse; $\mathbb{E}\{\cdot\}$ is statistical expectation.


\vspace*{-.91\baselineskip}
\section{Channel Model}
\label{sec.Channel Model}

Let us consider the standard discrete-time model of the Gaussian MIMO channel:
\bal
\label{eq.ch.mod}
\by_1 = \bH_1\bx +\bxi_1
\eal
where $\by_1, \bx, \bxi_1$ and $\bH_1$ are the received and transmitted signals, noise and channel matrix. This is illustrated in Fig. 1. The noise is assumed to be complex Gaussian with zero mean and unit variance, so that the SNR equals to the signal power. A complex-valued channel model is assumed throughout the paper, with full channel state information available both at the transmitter and the receiver. Gaussian signaling is known to be optimal in this setting \cite{Biglieri}-\cite{Telatar-95} so that finding the channel capacity $C$ amounts to finding an optimal transmit covariance matrix $\bR$, which can be expressed as the following optimization problem (P1):
\bal
\label{eq.C.def}
(P1):\ C = \max_{\bR \in S_R} C(\bR)
\eal
where $C(\bR) = \log|\bI +\bW_1\bR|$, $\bW_1=\bH_1^+\bH_1$, $\bR$ is the Tx covariance and $S_R$ is the constraint set. In the case of the total power constraint (TPC) only, it takes the form
\bal
S_R= S_{TPC} \triangleq \{\bR: \bR\ge 0, \tr(\bR) \le P_T\},
\eal
where $P_T$ is the maximum total Tx power. The solution to this problem is well-known: optimal signaling is on the eigenmodes of $\bW_1$, so that they are also the eigenmodes of optimal covariance $\bR^*=\bR_{WF}$, and the optimal power allocation is via the water-filling (WF). This solution can be compactly expressed as follows:
\bal\notag
\bR_{WF} \triangleq(\mu^{-1}\bI-\bW_1^{-1})_+ = \sum_{i: \gl_{1i} >\mu} (\mu^{-1} -\gl_{1i}^{-1})\bu_{1i}\bu_{1i}^+
\eal
where $\mu\ge 0$ is the "water" level found from the total power constraint $tr(\bR^*)=P_T$, $\gl_{1i}, \ \bu_{1i}$ are $i$-th eigenvalue and eigenvector of $\bW_1$, so that $\gl_{1i} =0$ are excluded from the summation due to $\gl_{1i} >\mu \ge 0$; $(\bR)_+$ denotes positive eigenmodes of Hermitian matrix $\bR$: $(\bR)_+=\sum_{i: \gl_i>0} \gl_i\bu_i\bu_i^+$, where $\gl_i,\ \bu_i$ are $i$-th eigenvalue and eigenvector of $\bR$.

In a spectrum-sharing multi-user system, there is a limit on how much interference the Tx can induce (via $\bx$) to primary user $U_k$, see Fig. \ref{fig.muli-user},
\bal
\mathbb{E}\{\bx^+\bH_{2k}^+\bH_{2k}\bx\} = tr(\bH_{2k}\bR\bH_{2k}^+) \le P_{Ik}
\eal
where $P_{Ik}$ is the maximum acceptable interference power and the left-hand side is the actual interference power at user $U_k$. In this setting, the constraint set becomes
\bal
\label{eq.SR.2}
S_R=\{\bR\ge 0: \ tr(\bR) \le P_T,\ tr(\bW_{2k}\bR) \le P_{Ik}\ \forall k\},
\eal
where $\bW_{2k}=\bH_{2k}^+\bH_{2k}$ and $P_{Ik}$ represent channel to user $k$ and respective interference constraint power, $k=1..K$, $K$ is the number of users, see Fig. \ref{fig.muli-user}.

The Gaussian signalling is still optimal in this setting and the capacity subject to the TPC and IPCs can still be expressed as in \eqref{eq.C.def} but the optimal covariance is not $\bR_{WF}$ anymore. In particular, the unitary-invariance of the feasible set $S_{TPC}$ under the TPC alone is lost due to the presence of the IPCs $tr(\bW_{2k}\bR) \le P_{Ik}$ in $S_R$ so that well-known results and "tricks" (based on unitary invariance of the feasible set) cannot be used anymore. Since the "shape" of the feasible set $S_R$ affects significantly optimal $\bR$, this results in a number of new properties of optimal signaling and of the capacity, as we show below.

\begin{figure}[t]
	\centerline{\includegraphics[width=3.5in]{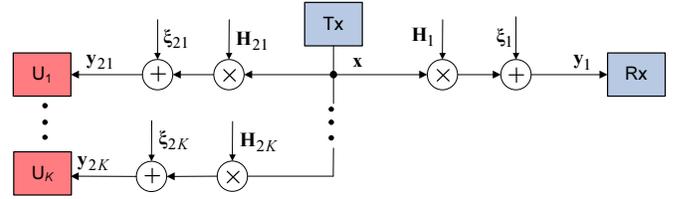}}
	\caption{A block diagram of multi-user Gaussian MIMO channel under interference constraints. $\bH_1$ and $\bH_{2k}$ are the channel matrices to the Rx and $k$-th user respectively. Interference constraints are to be satisfied for each user.}
		\label{fig.muli-user}
\end{figure}


One may also consider the total (rather than individual) interference power constraint so that
\bal\notag
S_{RT}=\{\bR: \bR\ge 0,\ tr(\bR) \le P_T,\ \sum_k tr(\bW_{2k}\bR) \le P_{I} \}
\eal
In this case, all the results of this paper will apply with $K=1$, $P_{I1}=P_I$, and  $\bW_{21} \to \sum_k \bW_{2k}$.

\section{Optimal Signalling Under the TPC and IPCs}
\label{sec.C.Int}

To characterize fully the capacity, a closed-form solution for the optimal signaling problem (P1) in \eqref{eq.C.def} under the joint constraints in \eqref{eq.SR.2} is given below in the general case, i.e. $\bW_1, \bW_{2k}$ are allowed to be singular and any of the constraints are allowed to be inactive. This extends the known results in \cite{Zhang-10}-\cite{Loyka-17-2} to the general case.

\begin{thm}
\label{thm.R*}
Consider the capacity of the Gaussian MIMO channel in \eqref{eq.C.def} under the joint TPC and IPC in \eqref{eq.SR.2}. The optimal Tx covariance matrix to achieve the capacity can be expressed as follows:
\bal
\label{eq.thm.R*.1}
\bR^* = \bW_{\mu}^{\dag} (\bI-\bW_{\mu}\bW_1^{-1}\bW_{\mu})_+ \bW_{\mu}^{\dag}
\eal
where $\bW_{\mu}=(\mu_1\bI+ \sum_k \mu_{2k}\bW_{2k})^{\frac{1}{2}}$; $\bW_{\mu}^{\dag}$ is the Moore-Penrose pseudo-inverse of $\bW_{\mu}$; $\mu_1, \mu_{2k}\ge 0$ are Lagrange multipliers (dual variables) responsible for the TPC and IPCs, found from
\bal
\label{eq.thm.R*.2}
\mu_1(tr(\bR^*) - P_T)=0,\ \mu_{2k}(tr(\bW_{2k}\bR^*)-P_{Ik})=0
\eal
subject to $tr(\bR^*)\le P_T$, $tr(\bW_{2k}\bR^*) \le P_{Ik}\ \forall k$. The respective capacity is
\bal
\label{eq.C}
C = \sum_{i: \gl_{\mu i}>1} \log \gl_{\mu i}
\eal
where $\gl_{\mu i} = \gl_i(\bW_{\mu}^{\dag}\bW_1\bW_{\mu}^{\dag})$.
\end{thm}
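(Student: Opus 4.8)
The plan is to characterize the optimum through its Karush--Kuhn--Tucker (KKT) conditions. Since $C(\bR)=\log|\bI+\bW_1\bR|$ is concave on $\bR\ge 0$ and every constraint in \eqref{eq.SR.2} is linear in $\bR$, (P1) is a convex program, and $\bR=\epsilon\bI$ with small $\epsilon>0$ is strictly feasible (whenever $P_T,P_{Ik}>0$), so Slater's condition holds and the KKT conditions are both necessary and sufficient for global optimality; it therefore suffices to exhibit $\bR^*$ together with multipliers satisfying them. First I would form the Lagrangian
\bal\notag
L=\log|\bI+\bW_1\bR|-\mu_1(\tr(\bR)-P_T)-\sum_k\mu_{2k}(\tr(\bW_{2k}\bR)-P_{Ik})+\tr(\bM\bR),
\eal
with $\mu_1,\mu_{2k}\ge 0$ and a Hermitian $\bM\ge 0$ enforcing $\bR\ge 0$. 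Using $d\log|\bI+\bW_1\bR|=\tr[(\bI+\bW_1\bR)^{-1}\bW_1\,d\bR]$ together with the identity $(\bI+\bW_1\bR)^{-1}\bW_1=\bW_1(\bI+\bR\bW_1)^{-1}$ (which also shows this gradient is Hermitian), the stationarity condition becomes
\bal\notag
\bW_1(\bI+\bR^*\bW_1)^{-1}+\bM=\mu_1\bI+\sum_k\mu_{2k}\bW_{2k}=\bW_{\mu}^2,
\eal
to be read alongside the complementary slackness $\bM\bR^*=\bo$ and \eqref{eq.thm.R*.2}.

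The crucial step is a congruence transformation that flattens the combined constraint. Setting $\tilde\bR=\bW_{\mu}\bR^*\bW_{\mu}$ and $\tilde\bW_1=\bW_{\mu}^{\dag}\bW_1\bW_{\mu}^{\dag}$, the weighted constraint collapses to a plain trace, $\tr(\bW_{\mu}^2\bR^*)=\tr(\tilde\bR)$, while $|\bI+\bW_1\bR^*|=|\bI+\tilde\bW_1\tilde\bR|$ by Sylvester's determinant identity. Conjugating the stationarity condition by $\bW_{\mu}^{\dag}$ turns it into $\tilde\bW_1(\bI+\tilde\bR\tilde\bW_1)^{-1}+\tilde\bM=\bI$ with $\tilde\bM\ge 0$ and $\tilde\bM\tilde\bR=\bo$, which is exactly the KKT system of the \emph{standard} water-filling problem $\max_{\tilde\bR\ge 0}\log|\bI+\tilde\bW_1\tilde\bR|$ at water level precisely $1$, the dual variables having been absorbed into $\bW_{\mu}$. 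Its solution is $\tilde\bR=(\bI-\tilde\bW_1^{-1})_+$, and transforming back yields \eqref{eq.thm.R*.1}. For the capacity, on each eigenmode of $\tilde\bW_1$ with eigenvalue $\gl_{\mu i}$ the matrix $\bI+\tilde\bW_1\tilde\bR$ has eigenvalue $\max(\gl_{\mu i},1)$, so only the modes with $\gl_{\mu i}>1$ contribute, giving \eqref{eq.C}. Existence of multipliers rendering $\bR^*$ primal-feasible and satisfying \eqref{eq.thm.R*.2} is guaranteed by strong duality.

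The main obstacle is that $\bW_{\mu}$, and possibly $\bW_1$, may be singular, so the congruence is no longer a bijection and $\bW_{\mu}^{\dag}$ must replace $\bW_{\mu}^{-1}$ throughout. The delicate point is the behaviour on $\mathcal{N}(\bW_{\mu})$: for $\bu\in\mathcal{N}(\bW_{\mu})$ one has $\bW_{\mu}^2\bu=\bo$, so the stationarity condition gives $\bu^+\bW_1(\bI+\bR^*\bW_1)^{-1}\bu+\bu^+\bM\bu=0$ with both terms nonnegative, forcing $\bM\bu=\bo$ and $\bW_1(\bI+\bR^*\bW_1)^{-1}\bu=\bo$. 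I would use these relations to show that no objective-increasing power can be placed in $\mathcal{N}(\bW_{\mu})$, so that the pseudo-inverse in \eqref{eq.thm.R*.1} correctly confines the solution to $\mathcal{R}(\bW_{\mu})$; the analogous reduction to $\mathcal{R}(\bW_1)$ lets $\bW_1^{-1}$ be interpreted on the subspace where $\bW_1$ is invertible. Carefully tracking these range and null-space inclusions---rather than the congruence algebra itself---is where the real effort lies.
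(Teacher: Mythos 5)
Your proposal is correct and follows essentially the same route as the paper's own proof: KKT conditions justified by convexity and Slater's condition, the whitening congruence $\tilde{\bR}=\bW_{\mu}\bR\bW_{\mu}$, $\tilde{\bW}_1=\bW_{\mu}^{\dag}\bW_1\bW_{\mu}^{\dag}$ reducing stationarity to a standard water-filling system at water level $1$ with solution $(\bI-\tilde{\bW}_1^{-1})_+$, and, for singular $\bW_{\mu}$, the observation that the two positive semi-definite terms in the stationarity equation force $\mathcal{N}(\bW_{\mu})\subseteq\mathcal{N}(\bW_1)\cap\mathcal{N}(\bM)$ so that the problem can be confined to $\mathcal{R}(\bW_{\mu})$. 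The only difference is presentational: the paper executes the singular-case reduction explicitly in the eigenbasis of $\sum_k\mu_{2k}\bW_{2k}$ (block-diagonalizing and setting the null-space blocks of $\bR$ to zero without loss of optimality), which is precisely the step you sketch.
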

\vspace*{-.5\baselineskip}
\begin{proof}
See Appendix.
\end{proof}

Based on \eqref{eq.thm.R*.1}, one observes that $\bW_{\mu}$ plays a role of a "whitening" filter, which disappears when all IPCs are inactive.
When $\bW_1$ is full-rank, i.e. $\bW_1 >0$, then $\bR^*$ is unique, which is not necessarily the case in general - a remarkable difference to the TPC-only case, where $\bR^* =\bR_{WF}$ is always unique.

A number of known special cases follow from \eqref{eq.thm.R*.1}: If $K=1$ and $\bW_{\mu}$ is full-rank, then $\bW_{\mu}^{\dag}=\bW_{\mu}^{-1}$ (see e.g. \cite{Horn-85}) and $\bR^*$ in \eqref{eq.thm.R*.1} reduces to the respective solutions in \cite{Zhang-10}-\cite{Loyka-17-2}. If all IPCs are inactive, then $\mu_{2k}=0$, $\bW_{\mu}= \sqrt{\mu_1}\bI$ and  $\bR^* =\bR_{WF}$, as it should be. Below, we will give explicit conditions when this is the case.

\vspace*{-.5\baselineskip}
\subsection{General properties}

Next, we explore some general properties of the capacity related to its unbounded growth with $P_T$ and its being strictly positive. It turns out that those properties induce a natural linear-algebraic structure for the set of channels of all users.

It is well-known that, without the IPC, $C(P_T)$ grows unbounded as $P_T$ increases, $C(P_T) \rightarrow \infty$ as $P_T \rightarrow \infty$ (assuming $\bW_1 \neq 0$). This, however, is not necessarily the case under the IPCs with all fixed $P_{Ik}$. The following proposition gives sufficient and necessary conditions when it is indeed the case.

\begin{prop}
\label{prop.C.inf}
Let $0 \le P_{Ik} < \infty$ be fixed for all $k$. Then, the capacity grows unbounded as $P_T$ increases, i.e. $C(P_T) \rightarrow \infty$ as $P_T \rightarrow \infty$, if and only if
\bal
\label{eq.prop.c.inf.1}
\bigcap_k \sN(\bW_{2k}) \notin \sN(\bW_1).
\eal
\vspace*{-1\baselineskip}
or, equivalently,
\bal
\sN\big(\sum_k \bW_{2k}\big) \notin \sN(\bW_1).
\eal
\end{prop}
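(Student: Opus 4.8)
The plan is to establish the two implications separately, after recording that the two displayed conditions coincide. Since each $\bW_{2k}\ge 0$, we have $\bx^+(\sum_k\bW_{2k})\bx=\sum_k\bx^+\bW_{2k}\bx$, a sum of non-negative terms, which vanishes iff every term does; hence $\sN(\sum_k\bW_{2k})=\bigcap_k\sN(\bW_{2k})$ and the two forms of the condition are identical. Writing $\bW_2\triangleq\sum_k\bW_{2k}$, the condition becomes $\sN(\bW_2)\not\subseteq\sN(\bW_1)$, and since both matrices are Hermitian this is equivalent, by taking orthogonal complements, to $\sR(\bW_1)\not\subseteq\sR(\bW_2)$. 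I also record that the IPCs force $\tr(\bW_2\bR)=\sum_k\tr(\bW_{2k}\bR)\le\sum_k P_{Ik}\triangleq P_I$, a single bound on $\tr(\bW_2\bR)$ that does not depend on $P_T$.

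For the ``if'' direction, assume $\sN(\bW_2)\not\subseteq\sN(\bW_1)$ and choose a unit vector $\bu\in\sN(\bW_2)$ with $\bu\notin\sN(\bW_1)$, so that $\bW_{2k}\bu=0$ for all $k$ while $\bu^+\bW_1\bu>0$. The idea is to signal entirely along this ``interference-free'' direction: take $\bR=P_T\bu\bu^+$. This is feasible since $\tr(\bR)=P_T$ and $\tr(\bW_{2k}\bR)=P_T\bu^+\bW_{2k}\bu=0\le P_{Ik}$ for every $k$. The matrix determinant lemma gives $C(\bR)=\log(1+P_T\bu^+\bW_1\bu)$, which tends to infinity with $P_T$; hence $C(P_T)\to\infty$.

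For the ``only if'' direction, assume $\sN(\bW_2)\subseteq\sN(\bW_1)$, equivalently $\sR(\bW_1)\subseteq\sR(\bW_2)$, and show the capacity stays bounded. The crux is to produce a finite constant $c$ with $\bW_1\le c\bW_2$ in the positive semidefinite order. I would get this by splitting any $\bx$ as $\bx_0+\bx_1$ with $\bx_0\in\sN(\bW_2)$ and $\bx_1\in\sR(\bW_2)$: the hypothesis $\sN(\bW_2)\subseteq\sN(\bW_1)$ makes $\bx_0$ invisible to both matrices, while on $\sR(\bW_2)$ one has $\bx_1^+\bW_2\bx_1\ge\gl_{\min}^+(\bW_2)\|\bx_1\|^2$ with $\gl_{\min}^+(\bW_2)$ the smallest positive eigenvalue. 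This yields $\bx^+\bW_1\bx\le\gl_{\max}(\bW_1)\|\bx_1\|^2\le c\,\bx^+\bW_2\bx$ with $c=\gl_{\max}(\bW_1)/\gl_{\min}^+(\bW_2)$. Congruence by $\bR^{1/2}$ and monotonicity of the trace then give $\tr(\bW_1\bR)\le c\,\tr(\bW_2\bR)\le cP_I$ for every feasible $\bR$, uniformly in $P_T$. Writing $C(\bR)=\log|\bI+\bW_1^{1/2}\bR\bW_1^{1/2}|=\sum_i\log(1+\sigma_i)$, with $\sigma_i\ge 0$ the eigenvalues of $\bW_1^{1/2}\bR\bW_1^{1/2}$ and $\sum_i\sigma_i=\tr(\bW_1\bR)$, the elementary bound $\log(1+\sigma)\le\sigma$ gives $C(\bR)\le\sum_i\sigma_i=\tr(\bW_1\bR)\le cP_I$, so the capacity is bounded above uniformly in $P_T$.

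I expect the operator inequality $\bW_1\le c\bW_2$ to be the main obstacle: it is exactly where the subspace-containment hypothesis enters, and it is what turns the fixed interference budget $P_I$ into a uniform ceiling on the useful signal power $\tr(\bW_1\bR)$. Once it is in hand, the ``if'' part reduces to a rank-one determinant computation and the ``only if'' part to the scalar inequality $\log(1+\sigma)\le\sigma$.
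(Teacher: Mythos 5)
Your proof is correct. The ``if'' direction and the preliminary identity $\bigcap_k\sN(\bW_{2k})=\sN(\sum_k\bW_{2k})$ coincide with the paper's argument: the paper also picks $\bu$ with $\bW_{2k}\bu=0$ for all $k$ and $\bW_1\bu\neq 0$, sets $\bR=P_T\bu\bu^+$, and lets $\log(1+P_T\bu^+\bW_1\bu)\to\infty$. Where you genuinely diverge is the ``only if'' direction. The paper works with eigendecompositions $\bW_k=\bU_{k+}\bLam_k\bU_{k+}^+$: from $\tr(\bW_2\bR)\le P_I$ it bounds the spectral norm $\gl_1(\bU_{2+}^+\bR\bU_{2+})\le P_I/\gl_{r_2}$, uses the range containment $\sR(\bW_1)\subseteq\sR(\bW_2)$ to transfer this bound to $\gl_1(\bU_{1+}^+\bR\bU_{1+})$, and then bounds each factor of the determinant, obtaining $C\le m\log(1+\gl_1(\bW_1)P_I/\gl_{r_2})$. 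You instead establish the operator inequality $\bW_1\le c\,\bW_2$ with $c=\gl_{\max}(\bW_1)/\gl_{\min}^+(\bW_2)$ (via the orthogonal splitting $\bx=\bx_0+\bx_1$, which is where the hypothesis $\sN(\bW_2)\subseteq\sN(\bW_1)$ enters), and then finish with trace monotonicity and $\log(1+\sigma)\le\sigma$, obtaining $C\le \tr(\bW_1\bR)\le cP_I$. Both arguments are sound and both constants are uniform in $P_T$, which is all the proposition needs. Your route is more self-contained---one PSD inequality plus a scalar bound, no need to compare the two eigenbases---and the inequality $\bW_1\le c\,\bW_2$ is a reusable characterization of the null-space containment. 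The price is a weaker ceiling: your bound $cP_I$ is linear in $P_I$, whereas the paper's $m\log(1+\gl_1(\bW_1)P_I/\gl_{r_2})$ is logarithmic and so captures the correct scaling of the interference-limited capacity with $P_I$; for the purpose of proving boundedness in $P_T$, however, the two are interchangeable.
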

\vspace*{-1\baselineskip}
\begin{proof}
See Appendix.
\end{proof}
\vspace*{-.3\baselineskip}

Let us make the following observations:

$\bullet$ Since the above conditions are both sufficient and necessary for the unbounded growth of the capacity, it gives the exhaustive characterization of all the cases where such growth is possible. In practical terms, those cases represent the scenarios where any high spectral efficiency is achievable given enough power budget.

$\bullet$ The unbounded growth of the capacity with $P_T$ depends only on $\sN(\sum_k \bW_{2k})$ and $\sN(\bW_1)$, all other details being irrelevant.

$\bullet$ It can be seen that the condition $\sN(\sum_k \bW_{2k}) \notin \sN(\bW_1)$ holds if $r(\sum_k \bW_{2k}) < r(\bW_1)$, and hence the capacity grows unbounded with $P_T$ under the latter condition.

$\bullet$ On the other hand, if $\sN(\sum_k \bW_{2k}) \in \sN(\bW_1)$, then very high spectral efficiency cannot be achieved even with unlimited power budget, due to the dominance of the IPCs. In particular, if $\bigcap_k \sN(\bW_{2k}) = \emptyset$ or, equivalently, $\sum_k \bW_{2k}>0$, then \eqref{eq.prop.c.inf.1} is impossible and the capacity stays bounded, even for infinite $P_T$ - the whole signaling space is dominated by IPCs in this case.

In the standard Gaussian MIMO channel without the IPC, $C=0$ if either $P_T=0$ or/and $\bW_1=0$, i.e. in a trivial way. On the other hand, in the same channel under the TPC and IPC, the capacity can be zero in non-trivial ways, as the following proposition shows. In practical terms, this characterizes the cases where interference constraints of primary users rule out any positive rate of a given user and, hence, spectrum sharing is not possible. To this end, let $\mathcal{K}_0=\{k: P_{Ik}=0\}$, i.e. a set of all primary users requiring no interference, $P_{Ik}=0$.

\begin{prop}
\label{prop.C0}
Consider the Gaussian MIMO channel under the TPC and IPC and let $P_T>0$, $\bW_1 \neq 0$. Its capacity is zero if and only if $P_{Ik}=0$ for some $k$ and
\bal
\label{eq.prop.C0.1}
\sN\big(\sum_{k\in \mathcal{K}_0} \bW_{2k}\big) \in \sN(\bW_1).
\eal
\end{prop}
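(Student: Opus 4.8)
The plan is to reduce the vanishing of the capacity to a purely linear-algebraic statement about the ranges of feasible covariances. The starting observation is that for any $\bR\ge 0$ one has $C(\bR)=\log|\bI+\bW_1\bR|=\sum_i\log\!\big(1+\gl_i(\bW_1^{1/2}\bR\bW_1^{1/2})\big)\ge 0$, with equality if and only if $\bW_1^{1/2}\bR\bW_1^{1/2}=0$, i.e. $\bW_1\bR=0$, i.e. $\sR(\bR)\subseteq\sN(\bW_1)$. Since the zero matrix is feasible (as $P_T>0$) and the capacity is the maximum of $C(\bR)$ over the feasible set, this yields the key reformulation: $C=0$ if and only if every feasible $\bR$ satisfies $\sR(\bR)\subseteq\sN(\bW_1)$. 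The whole argument then amounts to deciding when a feasible $\bR$ reaching outside $\sN(\bW_1)$ does or does not exist.

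Next I would exploit the zero-interference constraints. For each $k\in\mathcal{K}_0$ the bound $\tr(\bW_{2k}\bR)\le P_{Ik}=0$ together with $\tr(\bW_{2k}\bR)\ge 0$ (both $\bW_{2k},\bR\ge 0$) forces $\tr(\bW_{2k}\bR)=0$, equivalently $\sR(\bR)\subseteq\sN(\bW_{2k})$. Using the identity $\sN\big(\sum_{k\in\mathcal{K}_0}\bW_{2k}\big)=\bigcap_{k\in\mathcal{K}_0}\sN(\bW_{2k})$, valid for positive semi-definite matrices, every feasible $\bR$ therefore obeys $\sR(\bR)\subseteq\sN\big(\sum_{k\in\mathcal{K}_0}\bW_{2k}\big)$. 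This delivers sufficiency at once: if $\mathcal{K}_0\neq\emptyset$ and \eqref{eq.prop.C0.1} holds, then $\sR(\bR)\subseteq\sN\big(\sum_{k\in\mathcal{K}_0}\bW_{2k}\big)\subseteq\sN(\bW_1)$ for every feasible $\bR$, so $C(\bR)=0$ throughout and $C=0$.

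For necessity I would argue the contrapositive, constructing a feasible $\bR$ with $C(\bR)>0$ whenever the stated conditions fail, and it is here that the rank-one construction and the edge cases need care. If $\mathcal{K}_0=\emptyset$, all $P_{Ik}>0$; since $\bW_1\neq \bo$ pick a unit vector $\bx$ with $\bx^+\bW_1\bx>0$ and set $\bR=p\bx\bx^+$: for $p>0$ small enough the TPC ($p\le P_T$) and all IPCs ($p\,\bx^+\bW_{2k}\bx\le P_{Ik}$) hold, while $\tr(\bW_1\bR)=p\,\bx^+\bW_1\bx>0$, so $C>0$. If instead $\mathcal{K}_0\neq\emptyset$ but \eqref{eq.prop.C0.1} fails, then $\sN\big(\sum_{k\in\mathcal{K}_0}\bW_{2k}\big)=\bigcap_{k\in\mathcal{K}_0}\sN(\bW_{2k})$ contains a unit vector $\bx\notin\sN(\bW_1)$; the same rank-one $\bR=p\bx\bx^+$ annihilates every zero-IPC constraint exactly ($\bx^+\bW_{2k}\bx=0$ for $k\in\mathcal{K}_0$), satisfies the remaining constraints for small $p>0$, and gives $\tr(\bW_1\bR)>0$, hence $C>0$. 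Combining the two cases shows that $C=0$ forces both $\mathcal{K}_0\neq\emptyset$ and \eqref{eq.prop.C0.1}.

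The main thing to get right is the bookkeeping rather than any deep step: one must check that a single rank-one direction can simultaneously meet all active constraints while reaching outside $\sN(\bW_1)$, which is why the scaling parameter $p$ and the sign structure of the zero-IPC constraints are used so explicitly. The empty-$\mathcal{K}_0$ case also deserves attention, since there \eqref{eq.prop.C0.1} would read $\sN(\bo)\subseteq\sN(\bW_1)$, i.e. the whole space inside $\sN(\bW_1)$, which is incompatible with $\bW_1\neq\bo$; this is precisely why the proposition must list $P_{Ik}=0$ for some $k$ (that is, $\mathcal{K}_0\neq\emptyset$) as a separate necessary condition. A minor point worth stating cleanly is the positive semi-definite null-space identity $\sN(\bA+\bB)=\sN(\bA)\cap\sN(\bB)$, which underlies the passage between the intersection form and the single-matrix form of the condition.
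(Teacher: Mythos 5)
Your proof is correct and follows essentially the same route as the paper: sufficiency via the chain $\sR(\bR)\subseteq\bigcap_{k\in\mathcal{K}_0}\sN(\bW_{2k})=\sN\big(\sum_{k\in\mathcal{K}_0}\bW_{2k}\big)\subseteq\sN(\bW_1)$ forcing $C(\bR)=0$ for every feasible $\bR$, and necessity by the contrapositive with an explicit feasible test covariance in each of the two cases. The only (immaterial) difference is that in the all-$P_{Ik}>0$ case the paper uses the scaled identity $\bR=\min\{P_T,p\}\bI/m$ whereas you use a rank-one matrix $p\,\bx\bx^+$ along a direction with $\bx^+\bW_1\bx>0$; both are feasible for small enough $p$ and give $C>0$.
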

\begin{proof}
See Appendix.
\end{proof}

Note that the condition $P_{Ik}=0$ is equivalent to zero-forcing transmission with respect to user $U_k$, i.e. the capacity is zero only if ZF transmission is required for at least one user; otherwise, $C>0$. The condition in \eqref{eq.prop.C0.1} cannot be satisfied if $r(\bW_1) > r(\sum_k \bW_{2k})$ and hence $C>0$ under the latter condition, which is also sufficient for unbounded growth of the capacity with $P_T$. This is summarized below.

\begin{cor}
\label{cor.C.inf}
If $r(\bW_1) > r(\sum_k \bW_{2k})$, then

1. $C \neq 0$ $\forall\ P_{Ik} \ge 0$ and $P_T>0$.

2. $C(P_T) \rightarrow\infty$ as  $P_T \rightarrow\infty$ $\forall\ P_{Ik} \ge 0$
\end{cor}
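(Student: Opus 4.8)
The plan is to read both claims off Propositions \ref{prop.C.inf} and \ref{prop.C0} by converting the null-space containments appearing there into rank inequalities. Two elementary facts drive everything. First, for positive semi-definite matrices the null space of a sum is the intersection of the null spaces, $\sN\big(\sum_k \bW_{2k}\big) = \bigcap_k \sN(\bW_{2k})$, because $\bx^+\big(\sum_k \bW_{2k}\big)\bx = \sum_k \bx^+\bW_{2k}\bx$ is a sum of non-negative terms and hence vanishes iff each term does. Second, for any $n\times n$ Hermitian $\bW$ one has $\dim\sN(\bW)=n-r(\bW)$, where $n$ is the common dimension of $\bW_1$ and the $\bW_{2k}$; consequently $\sN(\bA)\subseteq\sN(\bB)$ forces $\dim\sN(\bA)\le\dim\sN(\bB)$, i.e.\ $r(\bB)\le r(\bA)$. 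Throughout I read the symbol $\notin$ (resp.\ $\in$) in \eqref{eq.prop.c.inf.1} and \eqref{eq.prop.C0.1} as $\not\subseteq$ (resp.\ $\subseteq$).

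For claim 2 I would argue by contraposition. Assuming the hypothesis $r(\bW_1)>r\big(\sum_k \bW_{2k}\big)$, suppose the capacity stayed bounded; by Proposition \ref{prop.C.inf} this requires $\sN\big(\sum_k \bW_{2k}\big)\subseteq\sN(\bW_1)$, whence $r(\bW_1)\le r\big(\sum_k \bW_{2k}\big)$ by the dimension count, contradicting the hypothesis. Hence $\sN\big(\sum_k \bW_{2k}\big)\not\subseteq\sN(\bW_1)$, and Proposition \ref{prop.C.inf} gives $C(P_T)\to\infty$. This holds for every fixed $P_{Ik}\ge 0$ since the condition in Proposition \ref{prop.C.inf} does not involve the values of the $P_{Ik}$.

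For claim 1 I would suppose toward a contradiction that $C=0$ for some admissible $P_{Ik}\ge 0$. By Proposition \ref{prop.C0} this forces $P_{Ik}=0$ for some $k$ (so $\mathcal{K}_0\neq\emptyset$) together with $\sN\big(\sum_{k\in\mathcal{K}_0}\bW_{2k}\big)\subseteq\sN(\bW_1)$. Since $\mathcal{K}_0\subseteq\{1,\dots,K\}$, the first fact yields $\sN\big(\sum_k \bW_{2k}\big)=\bigcap_{k}\sN(\bW_{2k})\subseteq\bigcap_{k\in\mathcal{K}_0}\sN(\bW_{2k})=\sN\big(\sum_{k\in\mathcal{K}_0}\bW_{2k}\big)\subseteq\sN(\bW_1)$, so the dimension count again gives $r(\bW_1)\le r\big(\sum_k \bW_{2k}\big)$, contradicting the hypothesis. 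Therefore $C\neq 0$ for all $P_{Ik}\ge 0$ and $P_T>0$.

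The argument is short, and its only delicate point is bookkeeping rather than analysis: one must pass correctly between the restricted sum $\sum_{k\in\mathcal{K}_0}\bW_{2k}$ and the full sum $\sum_k \bW_{2k}$ (a smaller index set enlarges the intersection of null spaces and hence can only lower the rank), and one must interpret the containment symbols consistently. No genuinely new estimate is needed beyond the two facts recalled at the outset; the substance is carried entirely by Propositions \ref{prop.C.inf} and \ref{prop.C0}.
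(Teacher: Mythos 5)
Your proof is correct and takes essentially the same route as the paper: the paper's own (terse) justification, given in the bullet after Proposition \ref{prop.C.inf} and in the sentence immediately preceding the corollary, is exactly this dimension-count argument showing that the null-space containments required by Propositions \ref{prop.C.inf} and \ref{prop.C0} are impossible when $r(\bW_1) > r\big(\sum_k \bW_{2k}\big)$. Your write-up merely makes explicit the two steps the paper leaves implicit, namely $\dim\sN(\bW)=m-r(\bW)$ and the passage from the partial sum over $\mathcal{K}_0$ to the full sum via nested null-space intersections.
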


Thus, the condition $r(\bW_1) > r(\sum_k\bW_{2k})$ represents favorable propagation scenarios where spectrum sharing is possible for any $P_{Ik}$ and arbitrary large capacity can be attained given enough Tx power budget.

Unlike the standard WF where the TPC is always active, it can be inactive under the IPCs, which is ultimately due to the interplay of interference and power constraints. The following proposition explores this in some details. To this end, we call a constraint "redundant" if it can be omitted without affecting the capacity\footnote{"inactive" implies "redundant" but the converse is not true: for example, inactive TPC means $tr\bR^* < P_T$ and this implies $\mu_1=0$ (from complementary slackness) so that it is also redundant (can be omitted without affecting the capacity), but $\mu_1=0$ does not imply  $tr\bR^* < P_T$ since $tr\bR^* = P_T$ is also possible in some cases.}.

\begin{prop}
\label{prop.TPC.inact}
The TPC is redundant only if
\bal
\label{eq.prop.TPC.inact.1}
\mathcal{N}(\sum_k \bW_{2k}) \in \mathcal{N}(\bW_1)
\eal
and is active otherwise. In particular, it is active (for any $P_T$ and $P_{Ik}$) if $r(\bW_1)> r(\sum_k \bW_{2k})$, e.g. if $\bW_1$ is full-rank and $\sum_k\bW_{2k}$ is rank-deficient.
\end{prop}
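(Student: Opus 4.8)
The plan is to establish the contrapositive form of the statement: assuming the null-space condition fails, $\sN(\sum_k \bW_{2k}) \notin \sN(\bW_1)$, I will show that the TPC is simultaneously active and non-redundant, which covers both ``redundant only if'' and ``active otherwise''. The key object is a \emph{free direction}. Since each $\bW_{2k}\ge 0$ one has $\sN(\sum_k \bW_{2k}) = \bigcap_k \sN(\bW_{2k})$, so failure of the condition produces a unit vector $\bu \in \bigcap_k \sN(\bW_{2k})$ with $\bu \notin \sN(\bW_1) = \sN(\bH_1)$. Along $\bu$ the interference to every user vanishes, $\bu^+\bW_{2k}\bu = 0$, while $\bH_1\bu \neq \bo$, so power poured into $\bu$ raises the rate at zero interference cost.

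To prove activeness I would argue by contradiction. An optimal $\bR^*$ exists because the TPC makes the feasible set compact. Suppose $\tr(\bR^*) < P_T$ and perturb to $\bR^* + \delta\,\bu\bu^+$ with $0 < \delta < P_T - \tr(\bR^*)$; this keeps the TPC feasible and, because $\tr(\bW_{2k}(\bR^* + \delta\,\bu\bu^+)) = \tr(\bW_{2k}\bR^*)$, keeps every IPC feasible as well. Using $\log|\bI + \bW_1\bR| = \log|\bI + \bH_1\bR\bH_1^+|$ and writing $\bA = \bI + \bH_1\bR^*\bH_1^+ > 0$, the matrix determinant lemma gives the objective value $\log|\bA| + \log\big(1 + \delta\,(\bH_1\bu)^+\bA^{-1}(\bH_1\bu)\big)$, which strictly exceeds $C(\bR^*)$ since $\bH_1\bu\neq\bo$ and $\bA^{-1} > 0$. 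This contradicts optimality, forcing $\tr(\bR^*) = P_T$, i.e. the TPC is active.

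For non-redundancy I would send $\delta \to \infty$ in the same expression: the objective diverges, so the problem with the TPC removed (IPCs alone) has infinite capacity --- equivalently $C(P_T)\to\infty$ as $P_T\to\infty$, matching Proposition~\ref{prop.C.inf}. Since $C(P_T) < \infty$ for every finite $P_T$, removing the TPC changes the capacity, so it cannot be redundant. Read contrapositively, this is precisely ``the TPC is redundant only if $\sN(\sum_k \bW_{2k}) \in \sN(\bW_1)$''.

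The rank statement then follows by a dimension count: $r(\bW_1) > r(\sum_k \bW_{2k})$ gives $\dim\sN(\bW_1) < \dim\sN(\sum_k \bW_{2k})$, and a higher-dimensional subspace cannot sit inside a lower-dimensional one, so the condition fails and the TPC is active for all $P_T, P_{Ik}$ by the above; the case of full-rank $\bW_1$ with rank-deficient $\sum_k \bW_{2k}$ is subsumed, since then $r(\bW_1)$ is maximal while $r(\sum_k \bW_{2k})$ is not, giving $r(\bW_1) > r(\sum_k \bW_{2k})$. The main obstacle I anticipate is logical rather than computational: the footnote stresses that ``inactive'', ``active'' and ``redundant'' are genuinely distinct, so I must derive activeness and non-redundancy from two separate arguments (the finite-$\delta$ perturbation and the $\delta\to\infty$ limit) and resist equating $\tr(\bR^*) = P_T$ with non-redundancy.
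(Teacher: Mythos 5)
Your proof is correct, but it takes a genuinely different route from the paper's. The paper's proof is dual-based: it reuses the KKT analysis from the proof of Theorem~\ref{thm.R*} (the singular-$\bW_{\mu}$ case), where stationarity with $\mu_1=0$ forces $\mathcal{N}(\sum_k \mu_{2k}\bW_{2k}) \in \mathcal{N}(\bW_1)\cap\mathcal{N}(\bM)$, and combines this with $\mathcal{N}(\sum_k \mu_{2k}\bW_{2k})=\mathcal{N}(\sum_{k\in\mathcal{K}_+}\bW_{2k}) \supseteq \mathcal{N}(\sum_k\bW_{2k})$ to get the necessity of \eqref{eq.prop.TPC.inact.1}; the rank claim then follows by the same dimension count you give. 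Your proof instead works entirely on the primal side: the free direction $\bu\in\bigcap_k\sN(\bW_{2k})\setminus\sN(\bW_1)$, the rank-one determinant update $\log|\bA+\delta\bH_1\bu\bu^+\bH_1^+|=\log|\bA|+\log(1+\delta\,\bu^+\bH_1^+\bA^{-1}\bH_1\bu)$ showing every optimum must saturate the TPC, and the $\delta\to\infty$ divergence showing the IPC-only problem has infinite value while the constrained one is finite. Each approach buys something: the paper's is essentially two lines given Theorem~\ref{thm.R*}'s machinery and yields a slightly stronger intermediate fact (the inclusion holds already for the sum over \emph{active} IPCs, $k\in\mathcal{K}_+$); yours is self-contained, avoids a subtle step the paper only asserts (that redundancy of the TPC forces $\mu_1=0$ for some valid multiplier set --- multipliers need not be unique, and the TPC-free problem need not attain its supremum), and establishes the sharper statement that \emph{every} optimal covariance satisfies $\tr(\bR^*)=P_T$. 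You are also right, and careful, to treat activeness and non-redundancy as separate claims requiring separate arguments, exactly in the spirit of the paper's footnote; note that your divergence argument alone would in fact imply activeness via ``inactive implies redundant,'' but your direct perturbation proof is cleaner and does not lean on that implication.
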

\begin{proof}
See Appendix.
\end{proof}

\section{Full-rank solutions}

While Theorem \ref{thm.R*} establishes a closed-form solution for optimal covariance $\bR^*$ in the general case, it is expressed via dual variables $\mu_1, \mu_{2k}$ for which no closed-form solution is known in general so they have to be found numerically using \eqref{eq.thm.R*.2}. This limits insights significantly. In this section, we explore the cases when the optimal covariance $\bR^*$ is of full rank and obtain respective closed-form solutions. To this end, we set $K=1$, $\bW_2=\bW_{21}$, $P_I=P_{I1}$, $\mu_2=\mu_{21}$.  First, we consider an interference-limited regime, where the TPC is redundant and hence the IPC is active.

\begin{prop}
\label{prop.FR.IP}
Let $\bW_1, \bW_2 >0$ and $P_I$ be bounded as follows:
\bal
\label{eq.prop.FR.IP}
m\gl_1(\bW_2 &\bW^{-1}_1) - tr(\bW_2\bW^{-1}_1) < P_I\\ \notag
 &\le \frac{m}{tr(\bW_2^{-1})} (P_T+tr(\bW^{-1}_1)) - tr(\bW_2\bW^{-1}_1)
\eal
then $\mu_1=0$, i.e. the TPC is redundant, $\bR^*$ is of full-rank and is given by:
\bal
\label{eq.prop.FR.IP.2}
\bR^* = \mu_2^{-1}\bW_2^{-1} - \bW^{-1}_1
\eal
where $\mu_2^{-1}=m^{-1}(P_I+tr(\bW_2\bW^{-1}_1))$. The capacity can be expressed as
\bal
\label{eq.prop.FR.IP.3}
C= m\log((P_I&+tr(\bW_2\bW_1^{-1}))/m) +\log\frac{|\bW_1|}{|\bW_2|}
\eal
\end{prop}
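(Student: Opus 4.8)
The plan is to specialize the general solution of Theorem~\ref{thm.R*} to the case $K=1$ with $\bW_1,\bW_2>0$, and to construct an explicit KKT-consistent dual choice with $\mu_1=0$, showing that the two displayed bounds on $P_I$ are precisely the feasibility conditions that validate this choice. Since $\bW_1>0$ forces uniqueness of $\bR^*$ (by the remark after Theorem~\ref{thm.R*}), it suffices to exhibit a single admissible primal--dual pair and then invoke convexity.

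First I would set $\mu_1=0$, so that $\bW_{\mu}=\sqrt{\mu_2}\,\bW_2^{1/2}$ is nonsingular and $\bW_{\mu}^{\dag}=\bW_{\mu}^{-1}=\mu_2^{-1/2}\bW_2^{-1/2}$. Substituting into \eqref{eq.thm.R*.1}, the inner matrix is $\bI-\bW_{\mu}\bW_1^{-1}\bW_{\mu}=\bI-\mu_2\bW_2^{1/2}\bW_1^{-1}\bW_2^{1/2}$. Using the identity $\gl_i(\bW_2^{1/2}\bW_1^{-1}\bW_2^{1/2})=\gl_i(\bW_2\bW_1^{-1})$ (both equal $\gl_i(\bW_1^{-1}\bW_2)$ via $\gl_i(AB)=\gl_i(BA)$), this inner matrix is positive definite --- so that $(\cdot)_+$ acts as the identity and $\bR^*$ is full-rank --- exactly when $\mu_2^{-1}>\gl_1(\bW_2\bW_1^{-1})$. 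In that regime \eqref{eq.thm.R*.1} collapses to $\bR^*=\bW_{\mu}^{-2}-\bW_1^{-1}=\mu_2^{-1}\bW_2^{-1}-\bW_1^{-1}$, the asserted form \eqref{eq.prop.FR.IP.2}.

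Next I would pin down $\mu_2$. With $\mu_1=0$ the TPC is redundant, so the IPC must be active; computing $\tr(\bW_2\bR^*)=m\mu_2^{-1}-\tr(\bW_2\bW_1^{-1})$ and setting it equal to $P_I$ gives $\mu_2^{-1}=m^{-1}(P_I+\tr(\bW_2\bW_1^{-1}))$. Feeding this into the two KKT requirements produces the stated interval: the full-rank threshold $\mu_2^{-1}>\gl_1(\bW_2\bW_1^{-1})$ rearranges to the \emph{lower} bound on $P_I$, while the feasibility requirement $\tr(\bR^*)=\mu_2^{-1}\tr(\bW_2^{-1})-\tr(\bW_1^{-1})\le P_T$ rearranges to the \emph{upper} bound. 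On this interval the triple $(\bR^*,\mu_1=0,\mu_2>0)$ satisfies primal feasibility (IPC tight, TPC slack, $\bR^*>0$), dual feasibility, and complementary slackness \eqref{eq.thm.R*.2}, hence is optimal.

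The capacity then follows by direct evaluation. From $\bI+\bW_1\bR^*=\mu_2^{-1}\bW_1\bW_2^{-1}$ one gets $C=\log|\mu_2^{-1}\bW_1\bW_2^{-1}|=m\log\mu_2^{-1}+\log(|\bW_1|/|\bW_2|)$, which becomes \eqref{eq.prop.FR.IP.3} after inserting $\mu_2^{-1}$; equivalently one may sum \eqref{eq.C} over all $i$, since every $\gl_{\mu i}>1$ in the full-rank regime. I expect the main obstacle to be the bookkeeping that assigns each inequality its correct KKT role --- verifying that the lower bound is exactly the positive-part (full-rank) threshold and the upper bound exactly the TPC-redundancy threshold --- together with confirming that the interval is nonempty, which requires $P_T$ large enough that $\gl_1(\bW_2\bW_1^{-1})\tr(\bW_2^{-1})<P_T+\tr(\bW_1^{-1})$ (otherwise the hypothesis is vacuous).
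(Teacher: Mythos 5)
Your proposal is correct and follows essentially the same route as the paper's proof: specialize Theorem~\ref{thm.R*} with $\mu_1=0$, determine $\mu_2$ from the active IPC, and identify the lower bound in \eqref{eq.prop.FR.IP} with the full-rank condition $\bR^*>0$ and the upper bound with TPC redundancy $\tr(\bR^*)\le P_T$. Your write-up is in fact more complete than the paper's (which omits the explicit capacity computation and the nonemptiness caveat on the interval), but the underlying argument is identical.
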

\begin{proof}
See Appendix.
\end{proof}

Next, we explore the case where $\bW_2$ is of rank 1. This models the case when a primary user has a single-antenna receiver or when its channel is a keyhole channel, see e.g. \cite{Chizhik-02}\cite{Levin-08}.

\begin{prop}
\label{prop.W2.r1}
Let $\bW_1$ be of full rank and $\bW_2$ be of rank-1, so that $\bW_2=\gl_2\bu_2\bu_2^+$, where $\gl_2>0$ and $\bu_2$ are its active eigenvalue and eigenvector. If
\bal\notag
&P_I \ge P_{I,th} =  m^{-1}\gl_2(P_T + tr(\bW^{-1}_1)) -\gl_2\bu_2^+\bW^{-1}_1\bu_2\\
\label{eq.prop.W2-r1.1a}
&P_T> m\gl_1(\bW^{-1}_1) -tr(\bW^{-1}_1)
\eal
then the IPC is redundant, the optimal covariance is of full rank and is given by the standard WF solution,
\bal
\label{eq.prop.W2-r1.R*1}
\bR^* = \bR^*_{WF} = \mu_{WF}^{-1}\bI - \bW^{-1}_1
\eal
where $\mu_{WF}^{-1} = m^{-1}(P_T+tr(\bW^{-1}_1))$.

If
\bal
\label{eq.prop.W2-r1.3}
&\gl_2\gl_1(\bW^{-1}_1)-\gl_2\bu_2^+\bW^{-1}_1\bu_2 < P_I < P_{I,th},\\
\label{eq.prop.W2-r1.4}
 & P_T > m\gl_2^{-1}P_I + m\bu_2^+\bW^{-1}_1\bu_2 -tr(\bW^{-1}_1)
\eal
then the IPC and TPC are active, the optimal covariance is of full rank and is given by
\bal
\label{eq.prop.W2-r1.R*2}
\bR^* = \mu_1^{-1}\bI - \bW^{-1}_1 -\alpha\bu_2\bu_2^+
\eal
where $\alpha=\mu_1^{-1}-(\mu_1+\gl_2\mu_2)^{-1}$, and $\mu_1,\ \mu_2>0$ are
\bal
\label{eq.prop.W2-r1.5}
\begin{aligned}
\mu_1 &= (P_T-\gl_2^{-1} P_I -\bu_2^+\bW^{-1}_1\bu_2+tr(\bW^{-1}_1))^{-1} (m-1)\\
\mu_2 &= (P_I +\gl_2\bu_2^+\bW^{-1}_1\bu_2)^{-1}-\gl_2^{-1}\mu_1
\end{aligned}
\eal
\end{prop}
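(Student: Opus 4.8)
The plan is to specialize the general KKT characterization of Theorem~\ref{thm.R*} to the rank-1 case $\bW_2 = \gl_2\bu_2\bu_2^+$ and to verify that the two candidate covariance matrices \eqref{eq.prop.W2-r1.R*1} and \eqref{eq.prop.W2-r1.R*2} are feasible, full-rank, and satisfy the complementary-slackness conditions \eqref{eq.thm.R*.2} in their respective parameter regimes. Since $\bW_1 > 0$, the objective $C(\bR)$ is strictly concave on the positive-definite cone and the feasible set is convex, so the KKT conditions are both necessary and sufficient; thus it suffices to exhibit a feasible primal-dual pair satisfying stationarity, and uniqueness follows automatically. The key structural simplification is that with $\bW_2$ rank-1, the ``whitening'' matrix becomes $\bW_\mu = (\mu_1\bI + \mu_2\gl_2\bu_2\bu_2^+)^{1/2}$, which is diagonal in any basis containing $\bu_2$ and acts as $\sqrt{\mu_1}$ on $\bu_2^\perp$ and $\sqrt{\mu_1 + \mu_2\gl_2}$ along $\bu_2$. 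This makes $\bW_\mu$ invertible and lets one write $\bW_\mu^{-2} = \mu_1^{-1}\bI - \alpha\bu_2\bu_2^+$ with $\alpha = \mu_1^{-1} - (\mu_1+\gl_2\mu_2)^{-1}$, which already reveals the shape of the claimed solution.

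For the first regime I would set $\mu_2 = 0$, so that $\bW_\mu = \sqrt{\mu_1}\,\bI$ and \eqref{eq.thm.R*.1} collapses to the water-filling form $\bR^* = \mu_1^{-1}\bI - \bW_1^{-1}$ of \eqref{eq.prop.W2-r1.R*1}, with $\mu_1 = \mu_{WF}$ fixed by the active TPC $\tr(\bR^*) = P_T$, giving $\mu_{WF}^{-1} = m^{-1}(P_T + \tr(\bW_1^{-1}))$. Two things must then be checked: full rank of $\bR^*$, which is exactly the condition $\mu_{WF}^{-1} > \gl_1(\bW_1^{-1})$, i.e.\ the second inequality in \eqref{eq.prop.W2-r1.1a}; and redundancy of the IPC, which amounts to verifying $\tr(\bW_2\bR^*) = \gl_2\bu_2^+\bR^*\bu_2 \le P_I$. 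Substituting the water-filling $\bR^*$ into the latter yields precisely the threshold $P_{I,th}$, so the condition $P_I \ge P_{I,th}$ guarantees that the dual variable $\mu_2 = 0$ is consistent with feasibility and complementary slackness. This confirms that the WF solution remains optimal above the threshold.

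For the second regime both multipliers are positive, so both constraints are active, and the argument runs through the two equations $\tr(\bR^*) = P_T$ and $\tr(\bW_2\bR^*) = P_I$. Using $\bR^* = \bW_\mu^{-2} - \bW_1^{-1} = \mu_1^{-1}\bI - \bW_1^{-1} - \alpha\bu_2\bu_2^+$ from \eqref{eq.prop.W2-r1.R*2}, I would compute $\tr(\bR^*) = m\mu_1^{-1} - \tr(\bW_1^{-1}) - \alpha$ and $\bu_2^+\bR^*\bu_2 = (\mu_1+\gl_2\mu_2)^{-1} - \bu_2^+\bW_1^{-1}\bu_2$, then set these equal to $P_T$ and $\gl_2^{-1}P_I$ respectively and solve the resulting two-by-two system for $\mu_1, \mu_2$; this should reproduce \eqref{eq.prop.W2-r1.5}. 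The main obstacle, and the step deserving the most care, is not the algebra of inverting this system but rather verifying that the resulting $\mu_1, \mu_2$ are \emph{strictly positive} and that $\bR^*$ is genuinely positive-definite; these sign conditions are exactly what the inequalities \eqref{eq.prop.W2-r1.3}--\eqref{eq.prop.W2-r1.4} encode. Specifically, $\mu_1 > 0$ translates into the lower bound on $P_T$ in \eqref{eq.prop.W2-r1.4}, $\mu_2 > 0$ gives the upper bound $P_I < P_{I,th}$ in \eqref{eq.prop.W2-r1.3}, and positive-definiteness of $\bR^*$ (the smallest eigenvalue staying above zero, which occurs along $\bu_2$ after subtracting $\alpha$) yields the lower bound on $P_I$ in \eqref{eq.prop.W2-r1.3}. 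I would therefore devote the bulk of the proof to tracking these three inequalities back to the stated hypotheses, since everything else is forced by the necessity and sufficiency of KKT.
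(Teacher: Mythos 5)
Your proposal follows essentially the same route as the paper's proof: specialize Theorem~\ref{thm.R*} using the rank-1 structure of $\bW_2$ (your spectral computation of $\bW_\mu^{-2}=\mu_1^{-1}\bI-\alpha\bu_2\bu_2^+$ is exactly the paper's matrix-inversion-lemma step), solve the two trace equations $\tr(\bR^*)=P_T$ and $\tr(\bW_2\bR^*)=P_I$ for $\mu_1,\mu_2$ to obtain \eqref{eq.prop.W2-r1.5}, and verify the sign and definiteness conditions; in the first regime, your KKT check with $\mu_2=0$ is equivalent to the paper's argument that $\bR_{WF}$ is feasible (indeed $\tr(\bW_2\bR_{WF})=P_{I,th}\le P_I$) and optimal for the relaxed (no-IPC) problem, hence optimal.

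One detail in your sketch would fail if executed literally: the parenthetical claim that positive-definiteness of $\bR^*$ binds ``along $\bu_2$ after subtracting $\alpha$''. When the IPC is active, the $\bu_2$ direction gives $\bu_2^+\bR^*\bu_2=(\mu_1+\gl_2\mu_2)^{-1}-\bu_2^+\bW_1^{-1}\bu_2=\gl_2^{-1}P_I>0$ automatically, so checking that direction produces no condition at all, and in particular cannot produce the lower bound on $P_I$. The binding direction is instead governed by the top eigenvector of $\bW_1^{-1}$: the paper derives $\bR^*>0$ by combining the active-IPC identity $\alpha=\mu_1^{-1}-\gl_2^{-1}P_I-\bu_2^+\bW_1^{-1}\bu_2$ with the first inequality in \eqref{eq.prop.W2-r1.3}, which is exactly equivalent to $\mu_1^{-1}>\gl_1(\bW_1^{-1})+\alpha$, and then uses Weyl's inequality $\gl_1(\bA+\bB)\le\gl_1(\bA)+\gl_1(\bB)$ to conclude $\mu_1^{-1}>\gl_1(\bW_1^{-1}+\alpha\bu_2\bu_2^+)$. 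With that replacement your outline closes. A minor remark: your attribution of $\mu_1>0$ to \eqref{eq.prop.W2-r1.4} and of $\mu_2>0$ to $P_I<P_{I,th}$, versus the paper's slightly different bookkeeping, is immaterial, since \eqref{eq.prop.W2-r1.4} and the second inequality in \eqref{eq.prop.W2-r1.3} are algebraically equivalent.
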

\vspace*{-1\baselineskip}
\begin{proof}
See Appendix.
\end{proof}

Note that the 1st two terms in \eqref{eq.prop.W2-r1.R*2} represent the standard WF solution while the last term is a correction due to the IPC, which is reminiscent of a partial null forming in an adaptive antenna array, see e.g. \cite{VanTrees-02}.

\section{Rank-1 Solutions}

In this section, we explore the case when $\bW_1$ is rank-one. As we show below, beamforming is optimal in this case. A practical appeal of this is due to its low-complexity implementation. Furthermore, rank-one $\bW_1$ is also motivated by single-antenna mobile units while the base station is equipped with multiple antennas, or when the MIMO propagation channel is of degenerate nature resulting in a keyhole effect, see e.g. \cite{Chizhik-02}\cite{Levin-08}.

We begin with the following result which bounds the rank of optimal covariance in any case.

\begin{prop}
\label{prop.rR}
If the TPC is active or/and $\bW_2$ is full-rank, then the rank of the optimal covariance $\bR^*$ of the problem (P1) in \eqref{eq.C.def} under the constraints in \eqref{eq.SR.2} is bounded as follows:
\bal
\label{eq.prop.rR}
r(\bR^*) \le r(\bW_1)
\eal
If the TPC is redundant and $\bW_2$ is rank-deficient, then there exists an optimal covariance $\bR^*$ of (P1) under the constraints in \eqref{eq.SR.2} that also satisfies this inequality.
\end{prop}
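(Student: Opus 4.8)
The plan is to read the bound off the Karush--Kuhn--Tucker (KKT) stationarity condition of (P1) rather than off the explicit formula \eqref{eq.thm.R*.1}, since the stationarity equation exposes the rank structure even when the matrices are singular. Problem (P1) is a concave maximization over a convex set, so KKT is necessary and sufficient. Attaching dual variables $\mu_1,\mu_{2k}\ge 0$ to the TPC and IPCs and a matrix multiplier $\bM\ge 0$ to the constraint $\bR\ge 0$, stationarity reads
\bal\notag
\bW_1(\bI+\bR^*\bW_1)^{-1}+\bM = \mu_1\bI+\sum_k \mu_{2k}\bW_{2k}=\bW_{\mu}^2,
\eal
with complementary slackness $tr(\bM\bR^*)=0$; since $\bM,\bR^*\ge 0$, this forces $\bM\bR^*=0$. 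Right-multiplying the stationarity identity by $\bR^*$ and using $\bM\bR^*=0$ yields the key relation
\bal\notag
\bW_{\mu}^2\bR^* = \bW_1(\bI+\bR^*\bW_1)^{-1}\bR^*.
\eal
The right-hand side is a product whose leftmost factor is $\bW_1$, so its rank is at most $r(\bW_1)$; hence $r(\bW_{\mu}^2\bR^*)\le r(\bW_1)$. The whole argument then reduces to transferring this bound from $\bW_{\mu}^2\bR^*$ to $\bR^*$, and this is exactly where the two cases separate.

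For the first statement, I would argue that $\bW_{\mu}^2=\mu_1\bI+\sum_k\mu_{2k}\bW_{2k}$ is positive definite: if the TPC is active then $\mu_1>0$ and $\bW_{\mu}^2\ge \mu_1\bI>0$, while if $\bW_2$ is full-rank then $\bW_{\mu}^2$ is positive definite because the binding constraint (IPC with $\mu_2>0$, or else the TPC with $\mu_1>0$) already makes it so. Left-multiplication by an invertible matrix preserves rank, so $r(\bR^*)=r(\bW_{\mu}^2\bR^*)\le r(\bW_1)$, and this holds for every optimal $\bR^*$.

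For the second statement, the TPC is redundant, so $\mu_1=0$ and $\bW_{\mu}^2=\sum_k\mu_{2k}\bW_{2k}$ is merely positive semi-definite; rank can now drop under left multiplication, which is why only existence is claimed. Here I would take $\bR^*$ to be the structured optimizer \eqref{eq.thm.R*.1} of Theorem \ref{thm.R*}. Its outer factors $\bW_{\mu}^{\dag}$ confine its range, $\mathcal{R}(\bR^*)\subseteq\mathcal{R}(\bW_{\mu}^{\dag})=\mathcal{R}(\bW_{\mu})=\mathcal{R}(\bW_{\mu}^2)$, and $\bW_{\mu}^2$ restricted to its own range is positive definite, hence injective on $\mathcal{R}(\bR^*)$. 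An injective map preserves $r(\bR^*)$, giving $r(\bW_{\mu}^2\bR^*)=r(\bR^*)$ and therefore $r(\bR^*)\le r(\bW_1)$ for this particular optimal covariance.

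I expect the main obstacle to be precisely this second case. One must recognize that the bound genuinely fails for some optimal covariances: when both $\bW_{\mu}^2$ and $\bW_1$ are singular, one may add positive semi-definite components supported on $\mathcal{N}(\bW_2)\subseteq\mathcal{N}(\bW_1)$ (the inclusion holding by Proposition \ref{prop.TPC.inact}) at no cost to the objective or to the interference, inflating the rank arbitrarily. The content of the statement is that the specific optimizer of Theorem \ref{thm.R*}, being confined to $\mathcal{R}(\bW_{\mu})$, is exactly the one on which the rank-preservation argument survives. A secondary point to handle with care is the meaning of ``TPC active'': the clean dividing line for the proof is $\mu_1>0$ versus $\mu_1=0$, and the boundary situation $tr(\bR^*)=P_T$ with $\mu_1=0$ must be grouped with the redundant case.
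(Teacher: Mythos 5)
Your proof is correct, and its first half coincides with the paper's own argument: both start from KKT stationarity, right-multiply by $\bR^*$, use $\bM\bR^*=0$ to obtain $\bW_{\mu}^2\bR^*=(\bI+\bW_1\bR^*)^{-1}\bW_1\bR^*$, and conclude $r(\bR^*)=r(\bW_{\mu}^2\bR^*)\le r(\bW_1)$ once $\bW_{\mu}^2$ is invertible; like you, the paper identifies ``TPC active'' with $\mu_1>0$ in this step. Where you genuinely diverge is the singular case ($\mu_1=0$, $\bW_2$ rank-deficient). The paper does not invoke Theorem \ref{thm.R*} there; instead it proves an auxiliary equivalence (Proposition \ref{prop.NW2}): using $\mathcal{N}(\bW_2)\subseteq\mathcal{N}(\bW_1)$ from Proposition \ref{prop.TPC.inact}, it projects the entire problem onto $\mathcal{R}(\bW_2)$, shows by a two-way feasibility argument that the projected problem (P2) has the same value and that $\bR^*=\bU_{2+}\tilde{\bR}^*\bU_{2+}^+$ is optimal for (P1), and then reruns the KKT rank argument in the projected space, where the constraint matrix $\tilde{\bLam}_2$ is positive definite. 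You instead take the structured optimizer of Theorem \ref{thm.R*} and observe that its outer factors $\bW_{\mu}^{\dag}$ confine its range to $\mathcal{R}(\bW_{\mu}^2)$, on which $\bW_{\mu}^2$ is injective, so left multiplication by $\bW_{\mu}^2$ preserves rank. Both arguments rest on the same underlying fact---an optimizer supported on $\mathcal{R}(\bW_{\mu})$ exists---but yours imports it from Theorem \ref{thm.R*} (whose singular-case proof is precisely where the paper establishes it), whereas the paper re-derives it self-containedly via (P2), which also yields an explicit formula for $\tilde{\bR}^*$ and avoids leaning on the slightly abusive notation $\bW_1^{-1}$ in \eqref{eq.thm.R*.1} when $\bW_1$ is singular. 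Your closing observations---that the rank bound can fail for other optimizers because components in $\mathcal{N}(\bW_2)\cap\mathcal{N}(\bW_1)$ inflate rank at no cost, and that the boundary situation $tr(\bR^*)=P_T$ with $\mu_1=0$ must be grouped with the redundant case---are consistent with, and somewhat more explicit than, the paper's treatment.
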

\begin{proof}
See Appendix.
\end{proof}

\begin{cor}
If $\bW_2$ is of full-rank or/and if the TPC is active, then the optimal covariance $\bR^*$ is of full-rank only if $\bW_1$ is of full-rank (i.e. rank-deficient $\bW_1$ ensures that $\bR^*$ is also rank-deficient).
\end{cor}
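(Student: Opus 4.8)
The plan is to obtain this corollary directly from the rank bound in Proposition~\ref{prop.rR}, since it is a purely logical consequence of that inequality rather than a new computation. First I would observe that the hypothesis of the corollary---$\bW_2$ full-rank or/and the TPC active---is precisely the hypothesis of the first clause of Proposition~\ref{prop.rR}, under which $r(\bR^*) \le r(\bW_1)$ holds for the optimal covariance itself. This is the clause I need, and identifying it correctly is the only point that requires any care: it guarantees the bound for the relevant $\bR^*$, as opposed to the complementary ``there exists an optimal covariance'' clause that applies when the TPC is redundant and $\bW_2$ is rank-deficient.

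Next I would recall that $\bR^*$ and $\bW_1$ are both $m \times m$ matrices, so that ``full-rank'' simply means rank equal to $m$. If $\bR^*$ is of full rank, then $m = r(\bR^*) \le r(\bW_1) \le m$, which forces $r(\bW_1) = m$, i.e.\ $\bW_1$ is of full rank. This establishes the stated implication that $\bR^*$ being full-rank requires $\bW_1$ to be full-rank. The parenthetical remark is then just the contrapositive: if $\bW_1$ is rank-deficient, then $r(\bW_1) < m$, whence $r(\bR^*) \le r(\bW_1) < m$, so $\bR^*$ is necessarily rank-deficient as well.

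Since all the substantive work is carried by Proposition~\ref{prop.rR}, I do not expect a genuine obstacle here. The only thing to be vigilant about is the scope of the hypothesis: the corollary deliberately restricts to the regime where the bound applies to the optimal covariance in force, and one should not inadvertently extend the claim to the excluded regime (TPC redundant and $\bW_2$ rank-deficient), where Proposition~\ref{prop.rR} asserts only the existence of a bound-satisfying optimum and not a statement about every optimal covariance.
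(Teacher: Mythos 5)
Your proposal is correct and matches the paper's treatment: the paper states this corollary without separate proof precisely because it is the immediate logical consequence of the first clause of Proposition~\ref{prop.rR} that you spell out, namely that under the stated hypothesis $r(\bR^*) \le r(\bW_1)$ forces $r(\bW_1)=m$ whenever $r(\bR^*)=m$. Your care in distinguishing the applicable clause from the ``there exists an optimal covariance'' clause (for the regime of redundant TPC and rank-deficient $\bW_2$) is exactly the right scoping and requires no further work.
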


\begin{cor}
If $r(\bW_1)=1$, then $r(\bR^*)=1$, i.e. beamforming is optimal.
\end{cor}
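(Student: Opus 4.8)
The plan is to read the statement off directly from the rank bound in Proposition~\ref{prop.rR}. Substituting $r(\bW_1)=1$ into \eqref{eq.prop.rR} gives $r(\bR^*)\le 1$, and since a covariance is positive semidefinite its rank is a nonnegative integer, so $r(\bR^*)\in\{0,1\}$. Both clauses of Proposition~\ref{prop.rR} serve equally well here: if the TPC is active or $\bW_2$ is full-rank the bound holds for \emph{every} optimal $\bR^*$, while in the remaining branch (redundant TPC together with rank-deficient $\bW_2$) it holds for at least one optimal $\bR^*$, which is all that is required to assert that a rank-one covariance is optimal.

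It then remains only to rule out the degenerate possibility $r(\bR^*)=0$, i.e. $\bR^*=\bo$, which would force $C=0$. For this I would invoke Proposition~\ref{prop.C0}: under the standing assumptions $P_T>0$ and $\bW_1\neq\bo$ (the latter guaranteed by $r(\bW_1)=1$), the capacity vanishes only if some $P_{Ik}=0$ and the null-space condition \eqref{eq.prop.C0.1} holds. Outside this trivial situation $C>0$, so any optimal $\bR^*$ must be nonzero and hence of rank exactly one, establishing that beamforming is optimal.

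The only point that requires care is the second branch of Proposition~\ref{prop.rR}, where merely the \emph{existence} of a bound-attaining optimal covariance is guaranteed, not its uniqueness. Since the corollary asserts optimality of beamforming rather than uniqueness of the optimizer, this existence statement is exactly what is needed, and no further work is necessary.
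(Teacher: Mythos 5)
Your proposal is correct and follows essentially the same route as the paper, which states this corollary as an immediate consequence of the rank bound in Proposition~\ref{prop.rR} (no separate proof is given there). Your additional step of excluding the degenerate case $r(\bR^*)=0$ via Proposition~\ref{prop.C0}, and your remark that the existence clause in the second branch of Proposition~\ref{prop.rR} suffices for an optimality (rather than uniqueness) claim, are careful refinements of the paper's implicit argument, not a different approach.
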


Note that this rank (beamforming) property mimics the respective property for the standard WF. However, while signalling on the (only) active eigenvector of $\bW_1$ is optimal under the standard WF (no IPC), it is not so when the IPC is active, as the following result shows. To this end, let $\bW_1=\gl_1\bu_1\bu_1^+$, i.e. it is rank-1 with $\gl_1>0, \bu_1$ be the (only) active eigenvalue and eigenvector; $\gamma_I=P_I/P_T$ be the "interference-to-signal" ratio, and
\bal
\gamma_1 = \frac{\bu_1^+\bW_2^\dag\bu_1}{\bu_1^+(\bW_2^\dag)^2\bu_1},\ \gamma_2 = \bu_1^+\bW_2\bu_1
\eal
where $\bW_2^\dag$ is Moore-Penrose pseudo-inverse of $\bW_2$; $\bW_2^\dag=\bW_2^{-1}$ if $\bW_2$ is full-rank \cite{Horn-85}.

\begin{prop}
\label{prop.r1.IPC}
Let $\bW_1$ be rank-1.

1. If $\gamma_I < \gamma_1$, then the TPC is redundant and the optimal covariance can be expressed as follows
\bal
\label{eq.prop.r1.IPC.1}
\bR^* = P_I\frac{\bW_2^\dag\bu_1\bu_1^+\bW_2^\dag} {\bu_1^+\bW_2^\dag\bu_1}
\eal
\vspace*{-1\baselineskip}
The capacity is
\bal
\label{eq.prop.r1.IPC.2}
C = \log(1+\gl_1 \alpha P_T)
\eal
where $\alpha = \gamma_I \bu_1^+\bW_2^\dag\bu_1 < 1$.

2. If $\gamma_I\ge \gamma_2$, then the IPC is redundant and the standard WF solution applies: $\bR^* = P_T\bu_1\bu_1^+$. This condition is also necessary for the optimality of $P_T\bu_1\bu_1^+$ under the TPC and IPC when $\bW_1$ is rank-1. The capacity is as in \eqref{eq.prop.r1.IPC.2} with $\alpha=1$.

3. If $\gamma_1 \le \gamma_I < \gamma_2$, then both constraints are active. The optimal covariance is
\bal
\label{eq.prop.r1.IPC.3}
\bR^* = P_T\frac{\bW_{2\mu}^{-1}\bu_1\bu_1^+\bW_{2\mu}^{-1}} {\bu_1^+\bW_{2\mu}^{-2}\bu_1}
\eal
where $\bW_{2\mu}=\bI+\mu_2\bW_2$, and $\mu_2>0$ is found from the IPC: $tr(\bW_2\bR^*)=P_I$. The capacity is as in \eqref{eq.prop.r1.IPC.2} with
\bal
\alpha= (\bu_1^+\bW_{2\mu}^{-1}\bu_1)^2 |\bW_{2\mu}^{-1}\bu_1|^{-2} \le 1
\eal
with equality if and only if $\bu_1$ is an eigenvector of $\bW_2$.
\end{prop}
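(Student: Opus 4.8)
The plan is to exploit the rank-one structure of $\bW_1$ to collapse (P1) into a scalar beamforming problem and then solve three constrained Rayleigh-quotient problems, one per regime. Since $\bW_1=\gl_1\bu_1\bu_1^+$ is rank-one, the beamforming corollary to Proposition~\ref{prop.rR} ($r(\bW_1)=1\Rightarrow r(\bR^*)=1$) lets me write $\bR^*=P\boldsymbol{v}\boldsymbol{v}^+$ with $\boldsymbol{v}^+\boldsymbol{v}=1$ and $0\le P\le P_T$. Then $\bW_1\bR$ is rank-one, so $C(\bR)=\log|\bI+\bW_1\bR|=\log(1+\gl_1 P|\bu_1^+\boldsymbol{v}|^2)$, while the TPC and IPC read $P\le P_T$ and $P\,\boldsymbol{v}^+\bW_2\boldsymbol{v}\le P_I$. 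Maximizing is therefore equivalent to maximizing $P|\bu_1^+\boldsymbol{v}|^2$ over the unit sphere and $P\ge 0$. For fixed $\boldsymbol{v}$ the objective is increasing in $P$, so $P=\min(P_T,\,P_I/(\boldsymbol{v}^+\bW_2\boldsymbol{v}))$; which term is active is precisely what separates the IPC-redundant, TPC-redundant, and both-active regimes. (I proceed directly rather than through Theorem~\ref{thm.R*}, whose $\bW_1^{-1}$ is not available for singular $\bW_1$.)

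The two single-constraint regimes come first. For part~2 (IPC redundant) I set $P=P_T$ and maximize $|\bu_1^+\boldsymbol{v}|^2\le 1$; the maximizer $\boldsymbol{v}=\bu_1$ is feasible exactly when $P_T\gamma_2\le P_I$, i.e. $\gamma_I\ge\gamma_2$, giving $\bR^*=P_T\bu_1\bu_1^+$ and $\alpha=1$. Necessity is immediate: for $\gamma_I<\gamma_2$ the matrix $P_T\bu_1\bu_1^+$ violates the IPC and so cannot be optimal. For part~1 (TPC redundant) I set $P=P_I/(\boldsymbol{v}^+\bW_2\boldsymbol{v})$ and maximize the generalized Rayleigh quotient $|\bu_1^+\boldsymbol{v}|^2/(\boldsymbol{v}^+\bW_2\boldsymbol{v})$. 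Applying Cauchy--Schwarz to $(\bW_2^\dag)^{1/2}\bu_1$ and $\bW_2^{1/2}\boldsymbol{v}$ bounds this quotient by $\bu_1^+\bW_2^\dag\bu_1$, with equality at $\boldsymbol{v}\propto\bW_2^\dag\bu_1$; direct substitution then yields \eqref{eq.prop.r1.IPC.1} and $\alpha=\gamma_I\bu_1^+\bW_2^\dag\bu_1$. The feasibility condition $P\le P_T$ is exactly $\gamma_I\le\gamma_1$, and $\alpha<1$ follows since $\alpha<\gamma_1\bu_1^+\bW_2^\dag\bu_1=(\bu_1^+\bW_2^\dag\bu_1)^2/(\bu_1^+(\bW_2^\dag)^2\bu_1)\le 1$, the last inequality being Cauchy--Schwarz applied to $\bu_1$ and $\bW_2^\dag\bu_1$.

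For part~3 both constraints bind, so $P=P_T$ and $\boldsymbol{v}^+\bW_2\boldsymbol{v}=\gamma_I$, and I maximize $|\bu_1^+\boldsymbol{v}|^2$ subject to $\boldsymbol{v}^+\boldsymbol{v}=1$ and $\boldsymbol{v}^+\bW_2\boldsymbol{v}=\gamma_I$. Stationarity of the Lagrangian $\boldsymbol{v}^+\bu_1\bu_1^+\boldsymbol{v}-\eta_1(\boldsymbol{v}^+\boldsymbol{v}-1)-\eta_2(\boldsymbol{v}^+\bW_2\boldsymbol{v}-\gamma_I)$ gives $(\eta_1\bI+\eta_2\bW_2)\boldsymbol{v}=(\bu_1^+\boldsymbol{v})\bu_1$, hence $\boldsymbol{v}\propto(\bI+\mu_2\bW_2)^{-1}\bu_1=\bW_{2\mu}^{-1}\bu_1$ with $\mu_2=\eta_2/\eta_1$, producing \eqref{eq.prop.r1.IPC.3} and $\alpha=(\bu_1^+\bW_{2\mu}^{-1}\bu_1)^2/(\bu_1^+\bW_{2\mu}^{-2}\bu_1)$. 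The bound $\alpha\le 1$ and its equality case are again Cauchy--Schwarz on $\bu_1$ and $\bW_{2\mu}^{-1}\bu_1$, with equality iff the two are collinear, i.e. iff $\bu_1$ is an eigenvector of $\bW_2$. The multiplier $\mu_2$ is pinned by the IPC $\tr(\bW_2\bR^*)=P_I$, i.e. $g(\mu_2):=\boldsymbol{v}^+\bW_2\boldsymbol{v}=\gamma_I$.

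The main obstacle is to show this equation has a unique root $\mu_2>0$ exactly on $\gamma_1\le\gamma_I<\gamma_2$, which needs $g$ to decrease monotonically from $g(0^+)=\gamma_2$ to $g(\infty)=\gamma_1$. Writing $g(\mu_2)=\big(\sum_j\nu_j a_j/(1+\mu_2\nu_j)^2\big)\big/\big(\sum_j a_j/(1+\mu_2\nu_j)^2\big)$ in the eigenbasis $\bW_2=\sum_j\nu_j\boldsymbol{e}_j\boldsymbol{e}_j^+$, with $a_j=|\boldsymbol{e}_j^+\bu_1|^2$, exhibits $g$ as a weighted mean of the $\nu_j$ whose weights tilt toward smaller $\nu_j$ as $\mu_2$ grows; a short sign check on $g'$ (or this weighted-mean reading) gives strict monotonicity and the two limits. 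The same expansion also delivers the ordering $\gamma_1\le\gamma_2$ needed for the three regimes to cover $[0,\infty)$: it is $\mathbb{E}[X^{-1}]\le\mathbb{E}[X]\,\mathbb{E}[X^{-2}]$ for the variable $X=\nu_j$ with probability $a_j$, which is just $\mathrm{Cov}(X,X^{-2})\le 0$ since $X$ is increasing and $X^{-2}$ decreasing. The remaining delicacy is the singular case $\bW_2\ge 0$: I would split $\bu_1$ along $\sR(\bW_2)$ and $\sN(\bW_2)$, verify the pseudo-inverse formulas persist (a nonzero null-space component merely makes $\gamma_2$ attainable with slack IPC, forcing part~2), and handle the degenerate endpoint where $\bu_1$ is an eigenvector of $\bW_2$, in which $\gamma_1=\gamma_2$ and the middle regime collapses.
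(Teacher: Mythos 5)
Your proposal is essentially correct on its main track, but it takes a genuinely different route from the paper. The paper never re-derives the problem from scratch: it plugs into the general KKT solution of Theorem~\ref{thm.R*}, handles the singularity of $\bW_1$ via a separate continuity lemma ($(\bI-\bW^{-1})_+=(1-\gl^{-1})_+\bu\bu^+$ for rank-one $\bW$), and treats singular $\bW_2$ by the projection result (Proposition~\ref{prop.NW2}). You instead invoke the beamforming corollary of Proposition~\ref{prop.rR} to collapse (P1) to a vector problem $\max P|\bu_1^+\boldsymbol{v}|^2$ s.t. $P\le P_T$, $P\boldsymbol{v}^+\bW_2\boldsymbol{v}\le P_I$, and solve each regime by Cauchy--Schwarz / generalized Rayleigh-quotient arguments. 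This buys several things the paper leaves implicit: the verification that the optimal direction is $\bW_2^{-1}\bu_1$ (resp.\ $\bW_{2\mu}^{-1}\bu_1$) with an equality-case analysis, the ordering $\gamma_1\le\gamma_2$ (your covariance/Chebyshev argument), and the monotone decrease of $g(\mu_2)=\boldsymbol{v}^+\bW_2\boldsymbol{v}$ from $\gamma_2$ to $\gamma_1$, which shows the three regimes tile $[0,\infty)$ and that $\mu_2>0$ exists and is unique. The costs are that (i) your reduction leans on the corollary to Proposition~\ref{prop.rR}, which in the paper is itself a KKT consequence, and (ii) after restricting to rank-one matrices the problem is no longer convex, so your assertion in part~3 that "both constraints bind" needs its own argument (the paper is equally brief here, arguing that neither single-constraint solution is feasible).

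There is, however, one concrete error: the parenthetical in your singular-case sketch, "a nonzero null-space component merely makes $\gamma_2$ attainable with slack IPC, forcing part~2," is false. Take $\bW_2=\mathrm{diag}(1,0)$ and $\bu_1=(\cos\theta,\sin\theta)^T$, $0<\theta<\pi/2$. Then $\gamma_2=\cos^2\theta$ but $\gamma_1=1$ (the pseudo-inverse annihilates the null-space component), so the ordering $\gamma_1\le\gamma_2$ you proved for nonsingular $\bW_2$ reverses, and for $\gamma_I<\gamma_2$ one has $\gamma_I<\gamma_1$ while the TPC cannot be redundant: dropping it would allow unbounded power along $\sN(\bW_2)$, and indeed Proposition~\ref{prop.TPC.inact} requires $\sN(\bW_2)\subseteq\sN(\bW_1)$, i.e.\ $\bu_1\perp\sN(\bW_2)$, for TPC redundancy. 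In this configuration the optimum for $\gamma_I<\gamma_2$ is the part-3 solution (with $\bW_{2\mu}=\bI+\mu_2\bW_2$ nonsingular), not part~2. The clean repair is the paper's: when $\bW_2$ is singular, TPC redundancy \emph{forces} $\bu_1\in\sR(\bW_2)$, so one may project everything onto $\sR(\bW_2)$ (Proposition~\ref{prop.NW2}) and rerun your nonsingular argument there, which is also exactly the hypothesis needed for your pseudo-inverse Cauchy--Schwarz step, since $|\bu_1^+\boldsymbol{v}|=|((\bW_2^\dag)^{1/2}\bu_1)^+(\bW_2^{1/2}\boldsymbol{v})|$ only holds when $\bu_1$ has no component in $\sN(\bW_2)$.
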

\begin{proof}
See Appendix.
\end{proof}

Note that the optimal signalling in case 1 is along the direction of $\bW_{2}^{\dag}\bu_1$ and not that of $\bu_1$ (unless $\bu_1$ is also an eigenvector of $\bW_{2}$), as would be the case for the standard WF with redundant IPC. In fact, $\bW_{2}^{\dag}$ plays a role of a "whitening" filter here. Similar observation applies to case 3, with $\bW_{2}$ replaced by $\bW_{2\mu}$. $\alpha$ in Proposition \ref{prop.r1.IPC} quantifies power loss due to enforcing the IPC; $\alpha=1$ means no power loss.

\section{Appendix}

\vspace*{-.5\baselineskip}
\subsection{Proof of Theorem \ref{thm.R*}}

Since the problem is convex and Slater's condition holds, the KKT conditions are both sufficient and necessary for optimality \cite{Boyd-04}. They take the following form:
\bal
\label{eq.T.KKT}
&-(\bI+\bW_1\bR)^{-1}\bW_1-\bM+\mu_1\bI+ \sum_k \mu_{2k}\bW_{2k} = 0\\
\label{eq.T.KKT.2}
&\bM\bR=0,\ \mu_1(tr(\bR)-P_T)=0,\notag \\
&\mu_{2k}(tr(\bW_{2k}\bR)-P_{Ik})=0,\\
&\bM\ge 0,\ \mu_1\ge 0,\ \mu_{2k}\ge 0\\
&tr(\bR) \le P_T,\ tr(\bW_{2k}\bR) \le P_{Ik},\ \bR\ge 0
\eal
where $\bM$ is Lagrange multiplier responsible for the positive semi-definite constraint $\bR\ge 0$. We consider first the case of full-rank $\bW_{\mu}$ (i.e. either $\mu_1>0$ or/and $\sum_k\mu_{2k}\bW_2>0$), so that $\bW_{\mu}^{\dag}=\bW_{\mu}^{-1}$. Let us introduce new variables: $\tilde{\bR}= \bW_{\mu}\bR\bW_{\mu}$, $\tilde{\bW_1}= \bW_{\mu}^{-1}\bW_1\bW_{\mu}^{-1}$, $\tilde{\bM}= \bW_{\mu}^{-1}\bM\bW_{\mu}^{-1}$. It follows that $\tilde{\bM}\tilde{\bR}=0$ and \eqref{eq.T.KKT} can be transformed to
\bal
&(\bI+\tilde{\bW_1}\tilde{\bR})^{-1}\tilde{\bW_1} +\tilde{\bM} = \bI
\eal
for which the solution is
\bal
\label{eq.T.KKT.Rtil}
\tilde{\bR} = (\bI-\tilde{\bM})^{-1} - \tilde{\bW}_1^{-1} = (\bI- \tilde{\bW}_1^{-1})_+
\eal
(this can be established in the same way as for the standard WF). Transforming back to the original variables results in \eqref{eq.thm.R*.1}. \eqref{eq.thm.R*.2} are complementary slackness conditions in \eqref{eq.T.KKT.2}; \eqref{eq.C} follows, after some manipulations, by using $\bR^*$ of \eqref{eq.thm.R*.1} in $C(\bR)$.

The case of singular $\bW_{\mu}$ is more involved. It implies $\mu_1=0$ so that $\bW_{\mu}=(\sum_k \mu_{2k}\bW_{2k})^{\frac{1}{2}}$. It follows from the KKT condition in \eqref{eq.T.KKT} that, for the redundant TPC ($\mu_1=0$),
\bal
\label{eq.thm.Inac.TPC.1}
\bQ_1(\bI+\bQ_1\bR\bQ_1)^{-1}\bQ_1+\bM=\sum_k \mu_{2k}\bW_{2k}
\eal
where $\bQ_1=\bW_1^{1/2}$. Let $\bx\in \mathcal{N}(\sum_k \mu_{2k}\bW_{2k})$, i.e. $\sum_k \mu_{2k}\bW_{2k}\bx=0$, then
\bal
\bx^+\bQ_1(\bI+\bQ_1\bR\bQ_1)^{-1}\bQ_1\bx +\bx^+\bM\bx=0
\eal
so that $\bx^+\bM\bx=0$ and $\bQ_1\bx=0$, since $\bM\ge0$ and $\bI+\bQ_1\bR\bQ_1 > 0$. Thus, $\mathcal{N}(\sum_k \mu_{2k}\bW_{2k}) \in \mathcal{N}(\bQ_1)=\mathcal{N}(\bW_1)$ and $\mathcal{N}(\sum_k \mu_{2k}\bW_{2k}) \in \mathcal{N}(\bM)$, i.e.
\bal
\label{eq.thm.Inac.TPC.3}
\mathcal{N}(\sum_k \mu_{2k}\bW_{2k}) \in \mathcal{N}(\bW_1) \cap \mathcal{N}(\bM)
\eal
and this condition is also necessary for the TPC to be redundant. Further notice that
\bal
\label{eq.thm.Inac.TPC.3a}
\mathcal{N}(\sum_k \mu_{2k}\bW_{2k})= \bigcap_{k\in \mathcal{K}_+} \mathcal{N}(\bW_{2k}) = \mathcal{N}(\sum_{k\in \mathcal{K}_+} \bW_{2k})
\eal
where $\mathcal{K}_+=\{k: \mu_{2k}>0\}$ is the set of users with active IPCs.
Let $\bW_2 = \sum_k \mu_{2k}\bW_{2k}$. Using \eqref{eq.thm.Inac.TPC.1}, \eqref{eq.thm.Inac.TPC.3} and introducing new variables
\bal\notag
\bLam_2&=\bU_2^+\bW_2\bU_2,\ \tilde{\bR}=\bU_2^+\bR\bU_2,\\ \tilde{\bQ}_1&=\bU_2^+\bQ_1\bU_2,\ \tilde{\bM}=\bU_2^+\bM\bU_2,
\eal
where $\bU_2$ is a unitary matrix of eigenvectors of $\bW_{2}$, one obtains
\bal\notag
&\bLam_2=\left(
          \begin{array}{cc}
            \bLam_{2+} & 0 \\
            0 & 0 \\
          \end{array}
        \right),
\tilde{\bQ}_1=\left(
          \begin{array}{cc}
            \bQ_{1+} & 0 \\
            0 & 0 \\
          \end{array}
        \right),\\
&\tilde{\bM}=\left(
          \begin{array}{cc}
            \bM_{+} & 0 \\
            0 & 0 \\
          \end{array}
        \right),
\tilde{\bR}=\left(
          \begin{array}{cc}
            \bR_{+} & \bR_{12}\\
            \bR_{21} & \bR_{22} \\
          \end{array}
        \right)
\eal
where $\bLam_{2+}>0$ is a diagonal matrix of strictly positive eigenvalues of $\bW_2$, so that \eqref{eq.thm.Inac.TPC.1} can be transformed to
\bal
\label{eq.Inac.TPC.2}
\bQ_{1+}(\bI+\bQ_{1+}\bR_+\bQ_{1+})^{-1} \bQ_{1+} + \bM_+= \bLam_{2+}>0
\eal
Using
\bal
\bQ_{1+}(\bI+\bQ_{+1}\bR_+\bQ_{1+})^{-1} \bQ_{1+} = (\bI+\bW_{1+}\bR_+)^{-1} \bW_{1+}
\eal
where $\bW_{1+}= \bQ_{1+}^2$ and adopting \eqref{eq.T.KKT.Rtil}, \eqref{eq.thm.R*.1}, one obtains
\bal
\label{eq.T.KKT.Rtil+}
\bR_+ = \bLam_{2+}^{-\frac{1}{2}}(\bI- \bLam_{2+}^{\frac{1}{2}}\bW_{1+}^{-1}\bLam_{2+}^{\frac{1}{2}})_+ \bLam_{2+}^{-\frac{1}{2}}
\eal
Since only $\bR_+$ affects the capacity,
one can set, without loss of optimality, $\bR_{22}=0$, $\bR_{12}=0$, $\bR_{21}=0$, and transform \eqref{eq.T.KKT.Rtil+} to
\bal
\label{eq.T.KKT.Rtil2}
\tilde{\bR} = (\bLam_{2}^{\dag})^{\frac{1}{2}}(\bI- \bLam_{2}^{\frac{1}{2}}\tilde{\bW}_{1}^{-1}\bLam_{2}^{\frac{1}{2}})_+ (\bLam_{2}^{\dag})^{\frac{1}{2}}
\eal
and hence, as desired,
\bal
\label{eq.T.KKT.R2}
\bR= \bU_2\tilde{\bR}\bU_2^+ = \bW_{\mu}^{\dag}(\bI- \bW_{\mu}\bW_{1}^{-1}\bW_{\mu})_+ \bW_{\mu}^{\dag}
\eal

\subsection{Proof of Proposition \ref{prop.C.inf}}

To prove the "if" part, observe that $\bigcap_k\sN(\bW_{2k}) \notin \sN(\bW_1)$ implies $\exists \bu: \bW_{2k}\bu=0\ \forall k, \bW_1\bu \neq 0$.
Now set $\bR=P_T\bu\bu^+$, for which $tr(\bR) =P_T, tr(\bW_{2k}\bR)=0\ \forall k$, so it is feasible for any $P_T, P_{Ik}$. Furthermore,
\bal
C \ge C(\bR) = \log(1+P_T\bu^+\bW_1\bu) \rightarrow \infty
\eal
as $P_T \rightarrow \infty$, since $\bu^+\bW_1\bu >0$.

Next, we will need the following technical result, which will also establish the last claim.
\begin{lemma}
The following holds:
\bal
\bigcap_k \sN(\bW_{2k}) = \sN\big(\sum_k \bW_{2k}\big)
\eal
\begin{proof}
Since $\bW_{2k}\ge 0$, $\bW_{2k}\bu =0$ is equivalent to $\bu^+\bW_{2k}\bu =0$ \cite{Zhang-99}, so that the following equivalence holds:
\bal\notag
\bu \in \sN\big(\sum_k \bW_{2k}\big) &\leftrightarrow \sum_k \bW_{2k}\bu =0 \leftrightarrow \sum_k \bu^+\bW_{2k}\bu =0 \\ \notag
&\leftrightarrow \bu^+\bW_{2k}\bu =0\ \forall k \leftrightarrow \bW_{2k}\bu =0\ \forall k \\ \notag
&\leftrightarrow \bu \in \bigcap_k \sN(\bW_{2k})
\eal
\end{proof}
\end{lemma}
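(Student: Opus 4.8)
The plan is to prove the two set inclusions separately. The reverse inclusion $\bigcap_k \sN(\bW_{2k}) \subseteq \sN(\sum_k \bW_{2k})$ is immediate: if $\bW_{2k}\bu = \bo$ for every $k$, then summing over $k$ gives $(\sum_k \bW_{2k})\bu = \bo$, so $\bu$ lies in the null space of the sum. This step uses only linearity and requires no structure on the $\bW_{2k}$.

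The substantive direction is the forward inclusion, and here I would exploit positive semi-definiteness of each $\bW_{2k}$ through the standard fact that for $\bW \ge 0$ one has $\bW\bu = \bo$ if and only if the quadratic form vanishes, $\bu^+\bW\bu = 0$. This follows from writing $\bW = \bA^+\bA$, whence $\bu^+\bW\bu = |\bA\bu|^2$, so a vanishing quadratic form forces $\bA\bu = \bo$ and hence $\bW\bu = \bo$; the converse is trivial. Beginning with $\bu \in \sN(\sum_k \bW_{2k})$, I would left-multiply the identity $(\sum_k \bW_{2k})\bu = \bo$ by $\bu^+$ to obtain $\sum_k \bu^+\bW_{2k}\bu = 0$. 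Since each summand is nonnegative by $\bW_{2k}\ge 0$, a sum of nonnegative terms equal to zero forces every term to vanish, so $\bu^+\bW_{2k}\bu = 0$ for all $k$. Applying the equivalence above in the reverse direction then yields $\bW_{2k}\bu = \bo$ for every $k$, i.e. $\bu \in \bigcap_k \sN(\bW_{2k})$.

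Combining the two inclusions gives the claimed equality. The only point demanding care is the positive semi-definite equivalence $\bW\bu=\bo \Leftrightarrow \bu^+\bW\bu=0$, which is precisely where the hypothesis $\bW_{2k}\ge 0$ is genuinely used: without it, a vanishing quadratic form need not imply a vanishing matrix-vector product, and the argument would collapse. Beyond citing or verifying this elementary property, I anticipate no real obstacle, since the remainder of the reasoning is purely formal and the nonnegativity-of-a-sum argument is routine.
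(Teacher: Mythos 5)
Your proof is correct and follows essentially the same route as the paper's: both hinge on the equivalence $\bW\bu=\bo \Leftrightarrow \bu^+\bW\bu=0$ for $\bW\ge 0$, combined with the observation that a vanishing sum of nonnegative quadratic forms forces each term to vanish. The only cosmetic differences are that you split the argument into two inclusions rather than a chain of equivalences, and you prove the positive semi-definite fact (via the factorization $\bW=\bA^+\bA$) where the paper simply cites it.
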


To prove the "only if" part, let $\bW_2=\sum_k\bW_{2k},\ P_I=\sum_k P_{Ik}$ and assume that $\sN(\bW_2) \in \sN(\bW_1)$. This implies that $\sR(\bW_1) \in \sR(\bW_2)$ (since $\sR(\bW)$ is the complement of $\sN(\bW)$ for Hermitian $\bW$). Let
\bal
\bW_k = \bU_{k+} \bLam_{k}\bU_{k+}^+,\ k=1,2
\eal
where $\bU_{k+}$ is a semi-unitary matrix of active eigenvectors of $\bW_k$ and diagonal matrix $\bLam_{k}$ collects its strictly-positive eigenvalues. Notice that, from the IPC,
\bal\notag
P_I &\ge tr(\bW_2\bR) = tr(\bLam_2\bU_{2+}^+\bR\bU_{2+})\\
 &\ge \gl_{r_2} tr(\bU_{2+}^+\bR\bU_{2+})
\eal
where $\gl_{r_2}>0$ is the smallest positive eigenvalue of $\bW_2$, so that
\bal
 \gl_1(\bU_{2+}^+\bR\bU_{2+}) \le P_I/\gl_{r_2} < \infty
\eal
for any $P_T$. On the other hand, $\sR(\bW_{1}) \in \sR(\bW_2)$ implies $span\{\bU_{1+}\} \in span\{\bU_{2+}\}$ and hence
\bal
 \gl_1(\bU_{1+}^+\bR\bU_{1+}) \le \gl_1(\bU_{2+}^+\bR\bU_{2+}) \le P_I/\gl_{r_2} < \infty
\eal
so that
\bal\notag
C(P_T) &= \log|\bI+ \bLam_1\bU_{1+}^+\bR^*\bU_{1+}|\\ \notag
&= \sum_i \log(1+\gl_i(\bLam_1\bU_{1+}^+\bR^*\bU_{1+}))\\ \notag
&\le m \log(1+\gl_1(\bW_1)\gl_1(\bU_{1+}^+\bR^*\bU_{1+}))\\
 &\le m \log(1+\gl_1(\bW_1)P_I/ \gl_{r_2}) < \infty
\eal
is bounded for any $P_T$, as required.

\subsection{Proof of Proposition \ref{prop.C0}}

To prove the "if" part, observe that $tr(\bW_{2k}\bR)=P_{Ik}=0$ implies that $\bW_{2k}\bR=0$ (since $\gl_i(\bW_{2k}\bR) \ge 0$) so that $\sR(\bR) \in \sN(\bW_{2k})$ for any $k\in  \mathcal{K}_0$ and hence $\sR(\bR) \in \sN(\sum_{k\in \mathcal{K}_0} \bW_{2k})$. Under the above condition, this implies $\sR(\bR) \in \sN(\bW_1)$ and hence $\bW_1\bR=0$ so that $\log|\bI+\bW_1\bR|=0$ for any feasible $\bR$. Hence, $C=0$.

To prove the "only if" part, assume first that $P_{Ik}>0$ for all $k$ and set $\bR= \min\{P_T, p\}\bI/m$, where $p= \min_k \{P_{Ik}/\gl_1(\bW_{2k})\}$. Note that $\bR$ is feasible: $tr(\bR) \le P_T$ and $tr(\bW_{2k}\bR) \le P_{Ik}$. Furthermore,
\bal
C \ge \log|\bI+\bW_1\bR| >0
\eal
and hence $P_{Ik}=0$ for some $k$ is necessary for $C=0$. To show that \eqref{eq.prop.C0.1} is necessary as well, assume that it does not hold, which implies that $\exists \bu: \bW_{2k}\bu=0\ \forall k \in \mathcal{K}_0,\ \bW_1\bu \neq 0$. Now set $\bR=p\bu\bu^+$, where $p=P_T$ if $P_{Ik}=0\ \forall k$; otherwise, $p= \min\{ P_T, p_1\}$, where $p_1=\min_{k \notin \mathcal{K}_0}\{P_{Ik}/\gl_1(\bW_{2k})\}$. Notice that, for this $\bR$, $tr(\bR) \le P_T, tr(\bW_{2k}\bR) \le P_{Ik}\ \forall k$, so it is feasible and
\bal
C \ge \log|\bI+\bW_1\bR| = \log(1+p\bu^+\bW_1\bu) >0
\eal
so that \eqref{eq.prop.C0.1} is necessary for $C=0$.

\subsection{Proof of Proposition \ref{prop.TPC.inact}}

Use \eqref{eq.thm.Inac.TPC.3} and \eqref{eq.thm.Inac.TPC.3a}, and note that these conditions are necessary for the TPC to be redundant (since the KKT conditions are necessary for optimality and $\mu_1=0$ is also necessary for the TPC to be redundant). Now, if $r(\bW_1)> r(\sum_k \bW_{2k})$, then
\bal\notag
\dim(\mathcal{N}(\sum_k \bW_{2k})) &=m-r(\sum_k \bW_{2k}) > m-r(\bW_1)\\
& =\dim(\mathcal{N}(\bW_1))
\eal
where $\dim(\mathcal{N})$ is the dimensionality of $\mathcal{N}$, and hence \eqref{eq.prop.TPC.inact.1} is impossible so that the TPC is active.

\subsection{Proof of Proposition \ref{prop.FR.IP}}

When the TPC is redundant, $\mu_1=0$ and \eqref{eq.prop.FR.IP.2} with
\bal
\mu_2^{-1}=m^{-1}(P_I+tr(\bW_2\bW^{-1}_1))
\eal
follow from \eqref{eq.thm.R*.1}. 1st condition in \eqref{eq.prop.FR.IP} follows from $\bR^*>0$, which is equivalent to
\bal
\gl_m(\bW_2^{-1}\bW_1) > \mu_2= m(P_I+tr(\bW_2\bW^{-1}_1))^{-1}
\eal
2nd condition in \eqref{eq.prop.FR.IP} ensures that the TPC is redundant for sufficiently large Tx power: $\mu_1=0$ and $tr(\bR^*) \le P_T$.

\subsection{Proof of Proposition \ref{prop.W2.r1}}

Start with the matrix inversion Lemma to obtain
\bal\notag
(\mu_1\bI +\mu_2\gl_2\bu_2\bu_2^+)^{-1} = ((\mu_1 &+\mu_2\gl_2)^{-1}- \mu_1^{-1})\bu_2\bu_2^+\\
&+\mu_1^{-1}\bI
\eal
so that \eqref{eq.prop.W2-r1.R*2} follows from \eqref{eq.thm.R*.1}. Since $\bW_1$ is full-rank and $\bW_2$ is rank-1, it follows that the TPC is always active, $\mu_1>0$ and $tr(\bR)=P_T$, from which one obtains
\bal
\label{eq.prop.W2-r1.R*.4}
m\mu_1^{-1} -\alpha - tr(\bW^{-1}_1) = P_T
\eal
When the IPC is active, $tr(\bW_2\bR)=P_I$, it follows that
\bal
\label{eq.prop.W2-r1.R*.5}
\gl_2(\mu_1+\mu_2\gl_2)^{-1}= P_I +\gl_2\bu_2^+\bW^{-1}_1\bu_2
\eal
Solving \eqref{eq.prop.W2-r1.R*.4} and \eqref{eq.prop.W2-r1.R*.5} for $\mu_1$, one obtains 1st equality in \eqref{eq.prop.W2-r1.5}; using it in \eqref{eq.prop.W2-r1.R*.5} results in 2nd equality in \eqref{eq.prop.W2-r1.5}. \eqref{eq.prop.W2-r1.4} and 1st inequality in \eqref{eq.prop.W2-r1.3} ensure that $\bR^* >0$, since
\bal
\label{eq.prop.W2-r1.R*.6}
\mu_1^{-1} > \gl_1(\bW^{-1}_1) +\alpha \ge \gl_1(\bW^{-1}_1+\alpha\bu_2\bu_2^+)
\eal
where 1st inequality is due to 1st inequality in \eqref{eq.prop.W2-r1.3} and \eqref{eq.prop.W2-r1.R*.5} while 2nd inequality is from $\gl_1(\bA+\bB) \le \gl_1(\bA)+\gl_1(\bB)$ where $\bA,\ \bB$ are Hermitian matrices (see e.g. \cite{Horn-85}). It follows from \eqref{eq.prop.W2-r1.R*.6} that $\mu_1^{-1}\bI > \bW^{-1}_1 +\alpha\bu_2\bu_2^+$ and hence $\bR^*>0$, and that $\mu_1>0$, as required. 2nd inequality in \eqref{eq.prop.W2-r1.3} ensures that the IPC is active, $\mu_2>0$.

To obtain \eqref{eq.prop.W2-r1.R*1}, observe that $\bR_{WF}$ is feasible under \eqref{eq.prop.W2-r1.1a}:
\bal
tr(\bR_{WF}) =P_T,\ tr(\bW_2\bR_{WF}) \le P_I,\ \bR_{WF} >0.
\eal
Since it is a solution without the IPC (as the standard full-rank WF solution), it is also optimal under the IPC.

\subsection{Proof of Proposition \ref{prop.rR}}

We consider first the case when $\bW_{\mu}$ is full-rank, i.e. when either the TPC is active, $\mu_1>0$, or/and $\bW_2>0$. It follows from \eqref{eq.T.KKT} that
\bal
(\bI+\bW_1\bR^*)^{-1}\bW_1\bR^* = \bW_{\mu}^2\bR^*
\eal
so that, since $(\bI+\bW_1\bR)$ and $\bW_{\mu}^2$ are full-rank,
\bal\notag
r(\bR^*) &= r(\bW_{\mu}^2\bR^*) = r(\bW_1\bR^*)\\
 &\le \min\{r(\bW_1),r(\bR^*)\} \le r(\bW_1)
\eal

The case of rank-deficient $\bW_{\mu}^2$ (i.e. when $\mu_1=0$ and $\bW_2$ is rank-deficient) is more involved. In this case, it follows from Proposition \ref{prop.TPC.inact} that $\mathcal{N}(\bW_2) \in \mathcal{N}(\bW_1)$ and hence $\mathcal{R}(\bW_1)\in \mathcal{R}(\bW_2)$ (if $\bW$ is Hermitian, $\mathcal{R}(\bW)$ is the complement of $\mathcal{N}(\bW)$), from which the following equivalency can be established, which is instrumental in the proof.

\begin{prop}
\label{prop.NW2}
If $\bW_2$ is rank-deficient and the TPC is redundant for the problem (P1) in \eqref{eq.C.def} under the constraint in \eqref{eq.SR.2}, then (P1) has the same value as the following problem (P2):
\bal
\label{eq.C.tilda}
(P2):\ \max_{\tilde{\bR} \ge 0} \tilde{C}(\tilde{\bR})\ \mbox{s.t.}\ tr(\tilde{\bLam}_2\tilde{\bR}) \le P_I,\ tr(\tilde{\bR}) \le P_T
\eal
where $\tilde{C}(\tilde{\bR})= |\bI+ \tilde{\bW}_1\tilde{\bR}|$, $\tilde{\bW}_1 = \bU_{2+}^+\bW_1\bU_{2+}$, $\tilde{\bLam}_2 = \bU_{2+}^+\bW_2\bU_{2+}>0$ is a diagonal matrix of strictly-positive eigenvalues of $\bW_2$ and $\bU_{2+}$ is a semi-unitary matrix whose columns are the corresponding active eigenvectors of $\bW_2$. Furthermore, an optimal covariance $\bR^*$ of (P1) can be expressed as follows:
\bal
\label{eq.R*2}
\bR^* = \bU_{2+}\tilde{\bR^*}\bU_{2+}^+
\eal
where $\tilde{\bR^*}$ is a solution of \eqref{eq.C.tilda}:
\bal
\label{eq.R.tild}
\tilde{\bR}^* = \tilde{\bLam}_2^{-\frac{1}{2}} (\mu_2^{-1}\bI-\tilde{\bLam}_2^{\frac{1}{2}}\tilde{\bW}_1^{-1} \tilde{\bLam}_2^{\frac{1}{2}})_+ \tilde{\bLam}_2^{-\frac{1}{2}}
\eal
and $\mu_2> 0$ is found from the IPC:
\bal
tr (\mu_2^{-1}\bI- \tilde{\bLam}_2^{\frac{1}{2}}\tilde{\bW}_1^{-1} \tilde{\bLam}_2^{\frac{1}{2}})_+ = P_I
\eal
\end{prop}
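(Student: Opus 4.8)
The plan is to use the redundancy of the TPC to collapse (P1) onto the range of $\bW_2$, and then invoke Theorem \ref{thm.R*} on the resulting lower-dimensional problem with a full-rank interference matrix. Since the TPC is redundant and $\bW_2$ is rank-deficient, we have already established $\mathcal{N}(\bW_2)\subseteq\mathcal{N}(\bW_1)$, equivalently $\mathcal{R}(\bW_1)\subseteq\mathcal{R}(\bW_2)$, the span of the columns of $\bU_{2+}$. First I would complete $\bU_{2+}$ to a full unitary matrix $[\bU_{2+},\bU_{2-}]$, the columns of $\bU_{2-}$ spanning $\mathcal{N}(\bW_2)$. Because $\mathcal{R}(\bW_1)\subseteq span\{\bU_{2+}\}$ and $\bW_1$ is Hermitian, $\bW_1\bU_{2-}=0$ and $\bU_{2-}^+\bW_1=0$, so in this basis $\bW_1$ is block-diagonal with the single nonzero block $\tilde{\bW}_1=\bU_{2+}^+\bW_1\bU_{2+}$, while $\bW_2=\bU_{2+}\tilde{\bLam}_2\bU_{2+}^+$ with $\tilde{\bLam}_2>0$.

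Next I would write an arbitrary feasible $\bR$ for (P1) in the block form induced by $[\bU_{2+},\bU_{2-}]$, with top-left block $\bR_+=\bU_{2+}^+\bR\bU_{2+}$ and bottom-right block $\bR_-=\bU_{2-}^+\bR\bU_{2-}$. Using $\bW_1\bU_{2-}=0$, the matrix $\bI+\bW_1\bR$ becomes block upper-triangular with diagonal blocks $\bI+\tilde{\bW}_1\bR_+$ and $\bI$, so that $C(\bR)=\log|\bI+\tilde{\bW}_1\bR_+|=\tilde{C}(\bR_+)$ depends only on $\bR_+$. Likewise $\tr(\bW_2\bR)=\tr(\tilde{\bLam}_2\bR_+)$ depends only on $\bR_+$, whereas $\tr(\bR)=\tr(\bR_+)+\tr(\bR_-)\ge\tr(\bR_+)$.

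With these identities the equivalence of (P1) and (P2) follows in both directions. Given any feasible $\bR$ for (P1), the compression $\tilde{\bR}=\bR_+$ is positive semi-definite, satisfies $\tr(\tilde{\bLam}_2\tilde{\bR})=\tr(\bW_2\bR)\le P_I$ and $\tr(\tilde{\bR})\le\tr(\bR)\le P_T$, hence is feasible for (P2) with the same objective $\tilde{C}(\tilde{\bR})=C(\bR)$. Conversely, given any feasible $\tilde{\bR}$ for (P2), the lift $\bR=\bU_{2+}\tilde{\bR}\bU_{2+}^+$ is positive semi-definite, has $\tr(\bW_2\bR)=\tr(\tilde{\bLam}_2\tilde{\bR})\le P_I$ and $\tr(\bR)=\tr(\tilde{\bR})\le P_T$, hence is feasible for (P1) with $C(\bR)=\tilde{C}(\tilde{\bR})$. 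Thus the two problems share the same optimal value, and lifting an optimal $\tilde{\bR}^*$ of (P2) via \eqref{eq.R*2} produces an optimal $\bR^*$ of (P1).

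Finally, I would derive the closed form \eqref{eq.R.tild} by applying Theorem \ref{thm.R*} to (P2). Since the TPC is redundant we set $\mu_1=0$, and since $\tilde{\bLam}_2>0$ the whitening matrix $\bW_\mu=(\mu_2\tilde{\bLam}_2)^{1/2}$ is full-rank, so \eqref{eq.thm.R*.1} applies with ordinary inverses; substituting $\bW_\mu$ and pulling $\mu_2$ out of the positive-part operator via $(\mu_2\bX)_+=\mu_2(\bX)_+$ yields exactly \eqref{eq.R.tild}, with $\mu_2>0$ fixed by the active IPC $\tr(\tilde{\bLam}_2\tilde{\bR}^*)=P_I$. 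The only delicate step is the reduction itself: the entire argument hinges on $\bW_1\bU_{2-}=0$, i.e. $\mathcal{R}(\bW_1)\subseteq\mathcal{R}(\bW_2)$, which is what makes the capacity independent of the blocks $\bR_-$ and $\bU_{2+}^+\bR\bU_{2-}$ and lets the off-diagonal and $\mathcal{N}(\bW_2)$-blocks of $\bR$ be discarded without loss of optimality.
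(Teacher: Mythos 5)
Your proof is correct and takes essentially the same approach as the paper: both arguments reduce (P1) to the range of $\bW_2$ using $\mathcal{R}(\bW_1)\subseteq\mathcal{R}(\bW_2)$ (a consequence of TPC redundancy), establish two-way feasibility and objective preservation between a matrix and its compression/lift $\bU_{2+}^+\bR\bU_{2+}$ and $\bU_{2+}\tilde{\bR}\bU_{2+}^+$, and then obtain \eqref{eq.R.tild} by specializing Theorem \ref{thm.R*} to (P2) with $\mu_1=0$ and $\tilde{\bLam}_2>0$. The only cosmetic difference is that you work with an explicit unitary completion $[\bU_{2+},\bU_{2-}]$ and a block-triangular determinant identity applied to arbitrary feasible points, whereas the paper applies the projector $\bP_2=\bU_{2+}\bU_{2+}^+$ (with $\bP_2\bW_k\bP_2=\bW_k$) to the two optimal solutions.
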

\begin{proof}
Let $\bR^*$ and $\tilde{\bR}^*$  be the solutions of (P1) and (P2) under the stated conditions and let $\bP_2=\bU_{2+}\bU_{2+}^+$ be a projection matrix on the space spanned by the active eigenvectors of $\bW_2$, i.e. on $\mathcal{R}(\bW_2)$. Note that $\bP_2\bW_k\bP_2=\bW_k$, $k=1,2$, since $\mathcal{R}(\bW_1)\in \mathcal{R}(\bW_2)$ under the stated conditions. Define $\tilde{\bR}'= \bU_{2+}^+\bR^*\bU_{2+}$ and observe that
\bal\notag
P_T &\ge tr(\bR^*) \ge tr(\tilde{\bR}'),\\
P_I &\ge tr(\bW_2\bR^*) = tr(\bP_2\bW_2\bP_2\bR^*) = tr(\tilde{\bLam}_2\tilde{\bR}')
\eal
so that $\tilde{\bR}'$ is feasible for (P2) and hence
\bal\notag
\tilde{C}(\tilde{\bR}^*)&\ge \tilde{C}(\tilde{\bR}')= \log|\bI+\tilde{\bW}_1\tilde{\bR}'|\\
&= \log|\bI+\bP_2\bW_1\bP_2\bR^*|
 =C(\bR^*)
\eal
On the other hand, let $\bR'= \bU_{2+}\tilde{\bR}^*\bU_{2+}^+$ and observe that
\bal\notag
P_T &\ge tr(\tilde{\bR}^*) = tr(\bR'),\\
P_I &\ge tr(\tilde{\bLam}_2\tilde{\bR}^*) = tr(\bP_2\bW_2\bP_2\bR') = tr(\bW_2\bR')
\eal
so that $\bR'$ is feasible for (P1) and hence
\bal
C(\bR^*)\ge C(\bR') =\log|\bI+\tilde{\bW}_1\tilde{\bR}^*|=\tilde{C}(\tilde{\bR}^*)
\eal
and finally $C(\bR^*)= \tilde{C}(\tilde{\bR}^*)$, $\mu_1=0$ (since the TPC is redundant for the original problem and hence for both problems) and the desired result follows.
\end{proof}

Note that Proposition \ref{prop.NW2} establishes the optimality of projecting all matrices on the sub-space $\mathcal{R}(\bW_2)$ and solving the projected problem instead, if the TPC is not active and $\bW_2$ is rank-deficient, i.e. if $\bW_{\mu}$ is rank-deficient.

Adopting the KKT condition in \eqref{eq.T.KKT} to the problem in \eqref{eq.C.tilda}, one obtains:
\bal
(\bI+\tilde{\bW}_1\tilde{\bR}^*)^{-1}\tilde{\bW}_1\tilde{\bR}^* = \mu_2\tilde{\bLam}_2\tilde{\bR}^*
\eal
so that
\bal\notag
r(\tilde{\bR}^*) &= r(\tilde{\bLam}_2\tilde{\bR}^*) = r(\tilde{\bW}_1\tilde{\bR}^*) \le \min(r(\tilde{\bW}_1),r(\tilde{\bR}^*))\\
    &\le r(\tilde{\bW}_1) \le r(\bW_1)
\eal
and, from \eqref{eq.R*2}, $r(\bR^*)=r(\tilde{\bR}^*)$, so that $r(\bR^*) \le r(\bW_1)$, as desired.

\subsection{Proof of Proposition \ref{prop.r1.IPC}}

To establish these results, we need the following technical Lemma, which can be established via the standard continuity argument.
\begin{lemma}
\label{lemma.I-W}
Let $\bW=\gl\bu\bu^+$ be rank-one positive semi-definite matrix, $\gl>0$. Then,
\bal
(\bI-\bW^{-1})_+=(1-\gl^{-1})_+\bu\bu^+
\eal
\end{lemma}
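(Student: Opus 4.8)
The plan is to reduce the matrix identity to a one-dimensional computation on two orthogonal eigenspaces, after first pinning down the meaning of $\bW^{-1}$ for the singular rank-one matrix $\bW$. Throughout I take $\bu$ to be unit-norm, so that $\gl$ is the single nonzero eigenvalue of $\bW=\gl\bu\bu^+$ and $\bu\bu^+$ is the orthogonal projector onto $\mathrm{span}\{\bu\}$, with complementary projector $\bP_\perp=\bI-\bu\bu^+$ onto $\sN(\bW)$. The only genuine issue is that $\bW^{-1}$ does not exist literally when $m>1$; as in the derivation of \eqref{eq.T.KKT.Rtil}, it must be read with the water-filling convention that the inverse is $+\infty$ on $\sN(\bW)$, so that those directions are annihilated by $(\cdot)_+$. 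I would make this precise through the stated continuity (regularization) argument.

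First I would regularize: set $\bW_\varepsilon=\gl\bu\bu^++\varepsilon\bP_\perp$ for $\varepsilon>0$, which is Hermitian positive definite and satisfies $\bW_\varepsilon\to\bW$ as $\varepsilon\to0^+$. Since $\bu\bu^+$ and $\bP_\perp$ are complementary orthogonal projectors, $\bW_\varepsilon$ is block-diagonal with respect to the decomposition $\mathrm{span}\{\bu\}\oplus\sN(\bW)$, and hence $\bW_\varepsilon^{-1}=\gl^{-1}\bu\bu^++\varepsilon^{-1}\bP_\perp$.

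Next I would form $\bI-\bW_\varepsilon^{-1}=(1-\gl^{-1})\bu\bu^++(1-\varepsilon^{-1})\bP_\perp$, which is already in spectral form: its eigenvalues are $1-\gl^{-1}$ on $\mathrm{span}\{\bu\}$ and $1-\varepsilon^{-1}$ on $\sN(\bW)$. Applying the positive-part operator $(\cdot)_+$ mode-by-mode, and noting that $1-\varepsilon^{-1}<0$ for all $0<\varepsilon<1$, the entire $\bP_\perp$ block is pruned, leaving $(\bI-\bW_\varepsilon^{-1})_+=(1-\gl^{-1})_+\bu\bu^+$ for every $\varepsilon\in(0,1)$. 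Since this expression does not depend on $\varepsilon$, the limit $\varepsilon\to0^+$ is immediate and yields $(\bI-\bW^{-1})_+=(1-\gl^{-1})_+\bu\bu^+$, as claimed.

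The main obstacle is conceptual rather than computational: one must fix the correct interpretation of $\bW^{-1}$ and verify that the regularization introduces no spurious eigenmodes in the limit. Here this is painless, because the result is already independent of $\varepsilon$ on $(0,1)$, so the ``continuity'' collapses to a constant; the only point to check is that any admissible regularization (full-rank, agreeing with $\bW$ on $\sR(\bW)$, with its $\sN(\bW)$-eigenvalues tending to $0$) drives the corresponding inverse eigenvalues to $+\infty$ and hence produces the same pruned limit.
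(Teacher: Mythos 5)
Your proof is correct and takes essentially the same approach as the paper: the paper merely asserts that the lemma ``can be established via the standard continuity argument,'' and your regularization $\bW_\varepsilon=\gl\bu\bu^+ +\varepsilon(\bI-\bu\bu^+)$, with the observation that $(\bI-\bW_\varepsilon^{-1})_+=(1-\gl^{-1})_+\bu\bu^+$ is constant for all $\varepsilon\in(0,1)$, is precisely that argument carried out in full. You have also correctly identified and resolved the one conceptual point the paper leaves implicit, namely the interpretation of $\bW^{-1}$ for singular $\bW$ as eliminating the null-space modes under $(\cdot)_+$.
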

Note that the $(\cdot)_+$ operator eliminates all singular modes of $\bW$ and hence its singularity is not a problem, which is somewhat similar to using pseudo-inverse for a singular matrix.

To prove the 1st case, we assume that $\bW_2$ is not singular and discuss the singular case later. Setting $\bW=\bW_{2\mu}^{-\frac{1}{2}}\bW_1\bW_{2\mu}^{-\frac{1}{2}}$ and applying this Lemma to $(\bI-(\bW_{2\mu}^{-\frac{1}{2}}\bW_1\bW_{2\mu}^{-\frac{1}{2}})^{-1})_+$
in \eqref{eq.thm.R*.1}, one obtains $\bR^*$ as in \eqref{eq.prop.r1.IPC.1}, after some manipulations, with $\bW_2^\dag=\bW_2^{-1}$. The condition $\gamma_I< \gamma_1$ ensures that the TPC is redundant, so that $\mu_1=0$ and hence $\bW_{2\mu}=\mu_2\bW_2>0$, $tr(\bW_2\bR^*)=P_I$ (since the IPC is active).

If $\bW_2$ is singular and the TPC is redundant, then one can project all matrices on $\mathcal{R}(\bW_2)$ and solve the projected problem instead without loss of optimality, as was shown in Proposition \ref{prop.NW2}. After some manipulations, this can be shown to result in using the pseudo-inverse instead of the inverse of $\bW_2$.

To prove the $\gamma_I\ge \gamma_2$ case, note that, under this condition, $\bR^* = P_T\bu_1\bu_1^+$ is feasible under the joint constraint (TPC+IPC). Since it is also optimal without the IPC, it has to be optimal under the joint constraints as well. This proves the "if" part. To prove the "only if" (necessary) part, observe that if $P_T \bu_1^+\bW_2\bu_1 > P_I$, then $\bR^* = P_T\bu_1\bu_1^+$ is not feasible and hence cannot be optimal under the IPC.

To prove the last case, $\gamma_1 \le \gamma_I < \gamma_2$, use \eqref{eq.thm.R*.1} and note that both constraints are now active (since neither \eqref{eq.prop.r1.IPC.1} nor $\bR^* = P_T\bu_1\bu_1^+$ are feasible under the stated conditions). Applying Lemma \ref{lemma.I-W} as in the 1st case, one obtains \eqref{eq.prop.r1.IPC.3} after some manipulations.

\end{document}